\newtheorem{lemma}{Lemma}
\newtheorem{corollary}{Corollary}
\theoremstyle{remark}
\title{The Evolutionary Price of Anarchy:
\\ Locally Bounded Agents in a Dynamic Virus Game}
\author{Krishnendu Chatterjee$^1$ \quad Laura Schmid$^1$ \quad Stefan Schmid$^2$\\
{\small $^1$ IST Austria \quad $^2$ Faculty of Computer Science, University of Vienna}  
}
\date{}
\begin{document}

\thispagestyle{empty}

\maketitle

\thispagestyle{empty}

\begin{abstract}
The Price of Anarchy (PoA) is a well-established 
game-theoretic concept to shed light on 
coordination issues arising in open distributed systems. 
Leaving agents to selfishly optimize 
comes with the risk of ending up in 
sub-optimal states (in terms of performance and/or costs), 
compared to a centralized system design. 
However, the PoA relies on strong assumptions 
about agents' rationality (e.g., resources and information) and
interactions, whereas in many distributed systems
agents interact locally with bounded resources. They do so repeatedly over time (in contrast to
``one-shot games''), and their strategies may evolve.

Using a more realistic evolutionary game
model, this paper introduces a realized 
\emph{evolutionary Price of Anarchy (ePoA)}.
The ePoA allows an exploration of equilibrium selection in dynamic distributed
systems with multiple equilibria, based on \emph{local} 
interactions of simple \emph{memoryless} agents.

Considering a fundamental game related to virus propagation
on networks, we present analytical bounds on the ePoA in 
basic network topologies and for different strategy update dynamics. 
In particular, deriving stationary distributions of the stochastic evolutionary process, 
we find that the Nash equilibria are not always the 
most abundant states, and that different processes
can feature significant off-equilibrium behavior, leading to a significantly
higher ePoA compared to the PoA studied traditionally in the literature.
\end{abstract}

\section{Introduction}

The performance and efficiency of large distributed systems, 
such as open peer-to-peer networks which any user can join, 
often critically depend on cooperation and can suffer if users
behave selfishly, e.g.:
consume but not contribute 
resources~\cite{bitthief}, 
choose routes~\cite{roughgarden2002bad} and neighbors~\cite{fabrikant2003network} 
to optimize their \emph{personal} benefits, etc.
Non-cooperative behavior can also affect security. For example, 
if security mechanisms such as the installation of anti-virus software 
are employed just for self-protection, a virus may spread further than in
cooperative environments~\cite{fleizach2007can}, while at the same
time increasing global security investment costs~\cite{Aspnes:2005:ISV:1070432.1070440}. 

The \emph{Price of Anarchy (PoA)}~\cite{koutsoupias1999worst} is a game-theoretic concept which
allows to assess to which extent a distributed
system is affected negatively by non-cooperative behavior.
Essentially, the PoA compares the optimal social welfare resp.~cost
to the welfare resp.~cost in the worst
Nash equilibrium: an equilibrium in which no selfish agent, henceforth
called \emph{player}, has an incentive
to change its behavior. If a distributed system
has a large PoA, this means that the performance resp.~cost
can be far from optimal: the system may require a redesign
or at least strong incentive mechanisms. 

However, Nash equilibria are often not a good model for
real-world distributed systems, for several reasons. In particular:
\begin{compactenum}
    \item \textbf{\emph{Dynamic character:}} Distributed systems typically 
		are not based on ``one-shot games'' but rely on dynamic interactions over time:
e.g., peers (i.e., \emph{players}) in peer-to-peer systems 
such as BitTorrent interact repeatedly, for example
using tit-for-tat strategies, leading to \emph{repeated games}~\cite{weibull1997evolutionary}.

\item \textbf{\emph{Local information:}} Players in 
distributed systems often only have local information
about the network, based on interactions (e.g., with neighboring players).
Obtaining global information
is typically infeasible, especially in large-scale distributed systems.

\item \textbf{\emph{Bounded resources:}} Players typically also have only limited resources,
e.g., in terms of memory or in terms of the complexity of the kind of
algorithms they can execute.  
\end{compactenum}

This paper is motivated by the desire to extend the 
concept of price of anarchy
to account for these important characteristics of distributed systems.
In particular, we aim to port the PoA to
\emph{evolutionary games} and \emph{local information} scenarios:
games with simple, \emph{memoryless} players which interact repeatedly and locally, and can update their strategies over time.  
This is timely, and the research community is currently discussing 
alternatives to Nash equilibria such as Bayes-Nash equilibria~\cite{singh2004computing} 
for games with incomplete information; 
at the same time, it is believed
that such extensions are complex due to having to take into account players' belief systems, amongst other issues, and thus introduce major research challenges~\cite{roughgarden2015price}. 

In fact, analyzing equilibrium selection in stochastic processes as described by evolutionary
games is already challenging in 2-player games in a population with $m$ pure strategies~\cite{fudenberg2006evolutionary,FUDENBERG2008229}. 
Games 
\emph{on graphs}, while also highly useful in verification and synthesis of (possibly distributed~\cite{pneuli1990distributed}) reactive systems~\cite{chatterjee2012survey,mohalikdist}, are often particularly difficult, also when considering evolutionary games~\cite{allen2014games},
due to the additional dependencies on the possible interactions.

\subsection{Our contributions}

This paper extends 
the notion of price of anarchy 
to evolutionary games, introducing the 
 \emph{evolutionary Price of Anarchy (ePoA)}.
In particular, we are interested in the equilibrium
behavior of simple memoryless players, 
which repeatedly and \emph{locally} 
interact on a graph. 
The ePoA is essentially a framework and can be used
to study different evolutionary dynamics 
(the parameters to our framework) and ways in which
players evolve their strategies.

To shed light on
how the evolutionary perspective can affect the
conclusion on the possible impact of selfish behavior
in distributed systems, we consider a concrete
case study: the well-known virus propagation game
introduced by Aspnes et al.~\cite{Aspnes:2005:ISV:1070432.1070440}.
We present an analysis of the evolutionary dynamics
of this game 
for the three fundamental dynamic models (related to genetic evolution
and imitation dynamics) and different basic network topologies
(cliques, stars, and beyond). 
Interestingly,  while the analysis of such evolutionary games
is challenging in general, we are able to provide
an exact characterization of the long-run frequencies of configurations for these scenarios. 

We make several interesting observations.
We find that the evolutionary dynamics of this game
give rise to a rich variety 
of behaviors. 
In particular, the ePoA
can be significantly worse than the classic PoA, 
for reasonable (i.e. not too high) mutation rates.
We also find that   
Nash equilibria are not always the 
most frequent, i.e., \emph{abundant}, states, and different processes
can feature significant off-equilibrium behavior.

Our analytical results are complemented with simulations, also of more complicated topologies.

\subsection{Organization}

The remainder of this paper is organized as follows.
Section~\ref{sec:prelim} introduces preliminaries.
We present the concept of evolutionary price of anarchy in
Section~\ref{sec:evol-poa} and study its application 
in the virus inoculation game in Section~\ref{sec:virusgame}.
Section~\ref{sec:appendsimresults} reports on simulation results
on more complex topologies.
After reviewing related work in Section~\ref{sec:relwork},
we conclude our work in Section~\ref{sec:conclusion}.
To improve readability, some technical details only appear
in the Appendix.
More detailed numerical results 
and our implementation are available upon request. 

\section{Preliminaries}\label{sec:prelim}

Before launching into a full description and analysis of our 
model, we  first revisit the virus inoculation game which will
serve us as a case study in this paper.
We will also
give a short introduction to evolutionary 
dynamics, evolutionary games and evolutionary graph theory. 

\subsection{The virus inoculation game}

In the classic one-shot virus inoculation game~\cite{Aspnes:2005:ISV:1070432.1070440},
nodes must choose between 
installing anti-virus software (inoculating themselves) at a cost, 
or risk infection by a virus that spreads from a random location and can reach a node 
if there is a path of not inoculated nodes in between. 
The network is modeled by an undirected graph $G=(V,E)$ with $N$ nodes. 
Nodes correspond to players in the game. Every node is equipped 
with a strategy $a_i$ that denotes its propensity to inoculate itself. 
If $a_i$ is 0 or 1, the strategy is called \emph{pure}. 
Every node takes an action according to its strategy; 
the overall configuration of nodes is reflected in 
the strategy profile $\Vec{a} \in [0,1]^n$. Inoculation 
costs $V$. After everyone 
has made their choice, the adversary picks a random node 
as the starting point of infection. The virus propagates 
through the network, infecting all non-inoculated nodes 
that have a direct path to an infected node. That is, propagation happens 
on an "attack graph" $G_{\vec{a}}$, where inoculated nodes 
have been removed, and only insecure nodes remain. Being 
infected then comes with a cost $I>V$. Inoculation prevents 
infection as well as further virus transmission by the inoculated node. 

The cost of a mixed strategy for a node $i$ in this model is therefore 
\begin{equation}
  \text{cost}_i(\Vec{a})=a_i V + (1-a_i)I \cdot p_i(\Vec{a}),  
\end{equation}

\noindent where $p_i(\Vec{a})$ is the probability of node $i$ 
being infected given the strategy profile $\Vec{a}$ and the 
condition that $i$ did not inoculate. The goal of each player 
is to minimize its own cost, while it does not take the 
resulting social cost to the system in total into account. This social cost is simply

\begin{equation}
    \text{cost}(\Vec{a}) = \sum_{j=0}^{N-1}\text{cost}_j(\Vec{a}).
\end{equation}

Aspnes et al.~then showed the following characterization of pure equilibria (for the proof and the extension to mixed equilibria, cf.~\cite{Aspnes:2005:ISV:1070432.1070440}):

\begin{corollary}[Characterization of pure equilibria]
Fix $V$ and $I$, and let the threshold be $t = VN/I$. A strategy 
profile $\Vec{a}$ is a pure Nash equilibrium if and only if:
\begin{compactenum}
    \item[(a)] Every component in the attack graph
    $G_{\Vec{a}}$ has at most size $t$.
    \item[(b)] Inserting any secure node $j$ and its 
    edges into $G_{\Vec{a}}$ yields a component of size at least $t$.
\end{compactenum}
\end{corollary}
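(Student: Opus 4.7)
The plan is to prove this characterization by a direct best-response analysis, checking for each type of node (insecure or secure) exactly when it has a unilateral incentive to deviate, and translating each condition into a geometric statement about components of the attack graph.

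First I would pin down the infection probability $p_i(\vec{a})$ for an insecure node $i$. Because the adversary picks the seed uniformly at random among the $N$ nodes, and an inoculated node blocks both infection and further propagation, the virus reaches $i$ if and only if the seed lies in the same connected component as $i$ in the attack graph $G_{\vec{a}}$. Hence $p_i(\vec{a}) = |C_i(\vec{a})|/N$, where $C_i(\vec{a})$ denotes the component of $i$ in $G_{\vec{a}}$. Plugging this into the cost formula gives, for a pure profile, $\text{cost}_i(\vec{a}) = V$ if $a_i = 1$ and $\text{cost}_i(\vec{a}) = I \cdot |C_i(\vec{a})|/N$ if $a_i = 0$.

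Next I would examine the two deviation directions. For an insecure node $i$, the only alternative is to inoculate, which would cost $V$; it has no incentive to deviate iff $I \cdot |C_i(\vec{a})|/N \le V$, i.e., $|C_i(\vec{a})| \le VN/I = t$. Since this must hold for every insecure node, and every component of $G_{\vec{a}}$ contains at least one insecure node (by definition), this is equivalent to condition~(a). For a secure node $j$, deviating to $a_j = 0$ inserts $j$ (and its edges) back into the attack graph, thereby merging $j$ with the distinct insecure components $C_{k_1},\dots,C_{k_\ell}$ adjacent to it; the resulting component has size $s_j := 1 + \sum_{r=1}^{\ell}|C_{k_r}|$, and the deviating cost is $I \cdot s_j/N$. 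Thus $j$ has no incentive to deviate iff $V \le I \cdot s_j/N$, i.e., $s_j \ge t$, which is precisely condition~(b).

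Combining the two directions, a pure profile $\vec{a}$ is a Nash equilibrium iff conditions~(a) and~(b) both hold, establishing the corollary. I do not foresee a serious obstacle: the argument is essentially a bookkeeping exercise once $p_i(\vec{a})$ is identified. The one place to be careful is in handling the case where a secure node $j$ has multiple insecure neighbors in different components of $G_{\vec{a}}$, so that deviating creates a single merged component whose size is the sum of the neighbors' component sizes plus one — this is exactly the quantity referred to by ``the component of size at least $t$'' in statement~(b).
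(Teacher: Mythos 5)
Your proposal is correct, and it is essentially the standard argument: the paper itself does not prove this corollary but defers to Aspnes et al., where the characterization follows from exactly this kind of best-response computation (there it is obtained as a corollary of a more general characterization of \emph{mixed} equilibria via expected component sizes, whereas you give the direct, self-contained pure-strategy version). Your identification of $p_i(\vec{a}) = |C_i(\vec{a})|/N$ and the careful handling of the merged component $s_j = 1 + \sum_r |C_{k_r}|$ for a deviating secure node are precisely the two points that make the equivalence with conditions (a) and (b) go through.
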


\subsection{Evolutionary dynamics and games}

Game theory considers individuals that consciously aim to reach the best outcome for them in a strategic decision situation. Its classic framework, using concepts like Nash equilibria 
usually makes some key assumptions about the players' rationality, their beliefs and cognitive abilities. In contrast, 
\emph{evolutionary game theory}, as a generic approach to evolutionary dynamics~\cite{smith1973logic,smith1988evolution,hofbauer1998evolutionary}, considers a population of players with bounded rationality instead. Each player adopts a strategy to interact with other population members in a game. The players' payoffs from these interactions -- which depend on the actions of the co-players and therefore on the abundance of different strategies -- are considered to be their 
\emph{evolutionary fitness}. Success in the game is translated into reproductive success: good strategies reproduce faster, whereas disadvantageous strategies go extinct. In a nutshell, evolutionary game theory describes dynamics that are dependent on the frequency of different strategies in a population. 

The evolutionary dynamics depends on the setup and structure of the population, the underlying game, and the way strategies spread. There is a distinction to be made between deterministic dynamics in the limit of infinite populations~\cite{hofbauer1990adaptive}, described by differential equations (\cite{hofbauer1998evolutionary}), and stochastic dynamics in the case of finite populations~\cite{imhof2010stochastic} that take into account the intrinsic randomness of evolutionary processes. 
Under appropriate conditions, stochastic models however approach their deterministic counterparts as the population size becomes large~\cite{traulsen2009stochastic}.

Furthermore, the way strategies spread crucially depends on the way members of the population are connected to each other. \emph{Evolutionary graph theory} considers different population structures and how the dynamics are changed by changing the underlying graph~\cite{lieberman2005evolutionary}. Vertices now represent the members of the population, and edges the spatial/social connections between them. 
This is a natural model for multi-player systems, as interactions then happen only between neighbors. The stochastic process is described by a Markov chain on the states of the graph. 
However, analytical results on these properties in games on arbitrary graphs are difficult to obtain in general; many of the fundamental underlying problems are computationally hard (e.g.,
\cite{Ibsen-Jensen15636}).

\section{The Evolutionary Price of Anarchy}\label{sec:evol-poa}

It may not be realistic to assume 
that nodes in a game such as the virus game in~\cite{Aspnes:2005:ISV:1070432.1070440} will have perfect information in the 
first place. In large networks, it is unlikely that
nodes separated by many neighbors would know $G$ and 
each others' decision and would then optimally react to 
this large amount of information. Rather, 
it is more natural to think that nodes only 
have \emph{local} information. They 
can see their neighbors' most recent choices and react to them only when updating their strategy, 
while being unaware of the choices of nodes with a higher degree of separation. To model this, we first introduce an evolutionary virus inoculation game and three different kinds of stochastic evolutionary dynamics. We then define a general notion of the evolutionary price of anarchy.

\subsection{The evolutionary virus inoculation game}

We consider an evolutionary process on a static graph $G=(V,E)$ with $V$ the set of vertices (players) and $E$ the set of edges. The $N=|V|$ vertices are occupied by players in the game, and we say that two nodes/players are \emph{neighbors} if there is an edge connecting them. One iteration of the evolutionary process consists of three stages:

\begin{compactenum}

\item \emph{Decision making}. All players make a decision whether to inoculate themselves against possible virus infections in the network. In case they choose to do so, 
they pay a cost $V$, and pay nothing otherwise. Players' propensity to inoculate is encoded in their strategy~$a_i$.

\item \emph{Virus propagation.} After everyone has concurrently made a decision, expected costs of the nodes when the system is attacked by a virus are calculated. To do so, we use a process with $n$ steps: in each step, the virus starts at a different node of the graph and spreads throughout the network. Inoculated players are unaffected by the virus and cannot transmit it to their neighbors, while unprotected players pay a cost once they become infected, and spread the virus to their other insecure neighbors. Uninfected players who are not inoculated do not pay anything. We will use the term "realized cost vector" to describe the vector $ct=[I,V,L=0]$, where infected nodes pay $I$, inoculated nodes pay $V$, and insecure but uninfected nodes pay nothing. 
Once the virus has swept over the system, infecting unprotected players and their unprotected neighbors, costs are recorded, infection status is reset, 
and the next virus starts at the next node, until every node has served as a starting point. Alternatively, the process can be randomized by letting sufficiently many viruses start at random nodes. Once this has happened, cumulative costs are averaged, giving expected negative payoffs for the current strategy configuration, and the next stage commences. 

\item \emph{Evolution of strategies.} After receiving their expected negative payoff, players can assess the damage done and subsequently change their strategy based on comparisons with their neighbors' payoffs, before they again decide which action to take, and the next game begins. It is important to realize that this updating process is based on purely local information: nodes only need to know their neighbors' payoffs in order to make their decisions. This means that also the outcomes that can be realized may differ from the Nash equilibria that are found in the perfect information model of Aspnes et al. We consider dynamics that can take both \emph{selection} and \emph{mutation} into account: strategies typically evolve according to their (frequency-dependent) fitness, but they can also randomly change with a small probability $\mu$, which we will refer to as the \emph{mutation rate}. This prevents absorbing states in the process and lets us compute a \emph{unique} stationary distribution that gives the long-term frequencies of system configurations. In the limiting case of $\mu \rightarrow 0$, the process always terminates and reaches a state where either all nodes are inoculated, or none are. 
\end{compactenum}

We differentiate here between two kinds of well known memoryless evolutionary dynamics, but in general we can configure our framework for general monotone imitation dynamics as described in~\cite{FUDENBERG2008229}: 

\begin{algorithm}[t]
\caption{\sf Moran Death-Birth process  }
\label{alg:morandb}
\begin{algorithmic}[1]
\State \textbf{Global Parameters:} $N$, $k$ \Comment{Number of players, maximum updates in one generation}
\State{time $\gets 0$}
\While{time < $k$}
\If{{\sf DrawRandomNumber(0,1)} $\leq \mu$}
\State{$mNode \gets {\sf PickRandomNode} $}
\State{$mNode.{\sf strategy} \gets {\sf DrawRandomNumber(0,1)} $}
\Else
\State $p1\gets{\sf PickRandomNode} $ \Comment{Choose a node u.a.r.}
\State $p2 \gets {\sf ChooseWeightedNeighbor(p1)}$ \Comment{Choose one of p1's neighbors, weighted by payoff}
\State $p1.{\sf strategy} \gets p2.{\sf strategy} $
\Comment{Update strategy of p1 }
\EndIf
\State{time $\gets $ time+1}
\EndWhile
\end{algorithmic}
\end{algorithm}

\begin{algorithm}[t]
\caption{\sf Moran Birth-Death process  }
\label{alg:moranbd}
\begin{algorithmic}[1]
\State \textbf{Global Parameters:} $N$, $k$ \Comment{Number of players, maximum updates in one generation}
\State{time $\gets 0$}
\While{time < $k$}
\If{{\sf DrawRandomNumber(0,1)} $\leq \mu$}
\State{$mNode \gets {\sf PickRandomNode} $}
\State{$mNode.{\sf strategy} \gets {\sf DrawRandomNumber(0,1)} $}
\Else
\State $p1\gets{\sf ChooseWeightedNode} $ \Comment{Choose a node with probability proportional to its payoff}
\State $p2 \gets {\sf PickRandomNeighbor(p1)}$ \Comment{Choose one of p1's neighbors u.a.r.}
\State $p2.{\sf strategy} \gets p1.{\sf strategy} $
\Comment{Update strategy of p2 }
\EndIf
\State{time $\gets $ time+1}
\EndWhile
\end{algorithmic}
\end{algorithm}

\begin{algorithm}[t]
\caption{\sf Pairwise comparison: Imitation process  }
\label{alg:imitation}
\begin{algorithmic}[1]
\State \textbf{Global Parameters:} $N$, $k$, $\beta$ \Comment{Number of players, maximum updates in one generation, selection strength}
\State{time $\gets 0$}
\While{time < $k$}
\If{{\sf DrawRandomNumber(0,1)} $\leq \mu$}
\State{$mNode \gets {\sf PickRandomNode} $}
\State{$mNode.{\sf strategy} \gets {\sf DrawRandomNumber(0,1)} $}
\Else
\State $p1\gets{\sf PickRandomNode} $ \Comment{Choose "learner" node uniformly at random}
\State $p2 \gets {\sf PickRandomNeighbor(p1)}$ \Comment{Choose "role model" node u.a.r.}
\State{$\pi\gets {\sf GetPayoff(p1)}$} 
\State{$\pi'\gets{ \sf GetPayoff(p2)}$}
\State{$\varrho=\frac{1}{1+e^{-\beta(\pi'-\pi)}}$}
\If{{\sf DrawRandomNumber(0,1)} $\leq \varrho$}
\Comment{Update strategy of p1 to p2's with prob. $\varrho$}
\State{$p1.{\sf strategy} \gets p2.{\sf strategy} $}
\Else
\State{$p1.{\sf strategy} \gets p1.{\sf strategy} $} 
\EndIf
\EndIf
\State{time $\gets $ time+1}
\EndWhile
\end{algorithmic}
\end{algorithm}

\begin{compactenum}
    \item[(a)] \textbf{Genetic evolution:} On one hand, we consider genetic evolution as described by the \emph{Moran process}. In this context, we analyze two different variants: a 
    \emph{death-birth (DB)} and \emph{birth-death (BD)} scenario, respectively (cf~\cite{taylor2004evolutionary} and Algs.~\ref{alg:morandb} and~\ref{alg:moranbd}). In the DB scenario, a node is picked to die in each time step of the process. The vacancy is then filled by a copy of one of its neighbors, with the probability of one node being the replacement in some (possibly non-linear) way proportional to its payoff, such that nodes with higher payoffs (or rather, fewer losses) have a higher chance of being chosen as the replacement. In the BD scenario, meanwhile, first a node is picked for reproduction in each round with probability proportional to its payoff. This node subsequently chooses one of its neighbors uniformly at random and replaces it with a copy of itself. After every update, payoffs are recomputed. 
    To visualize an example of such a process, we illustrate the DB scenario in Fig.~\ref{fig:moran}.
    
    \item[(b)] \textbf{Cultural evolution:} 
    On the other hand, we also consider "cultural" evolution through imitation dynamics in the form of the pairwise comparison process (\cite{traulsenPC} and Alg.~\ref{alg:imitation}). Here, a focal player picks a neighboring "role model" node uniformly at random in every time step, observes its payoff, and switches to its strategy with probability 
\vspace*{-0.3cm}
\begin{equation}
\varrho=\frac{1}{1+e^{-\beta(\pi'-\pi)}}
\end{equation}
\vspace*{-0.4cm}

where $\pi'$ is the payoff of the neighbor and $\pi$ the node's own payoff. This function is parameterized by the \emph{selection strength} $\beta \geq 0$, which is a measure for the noise in the update probability, and with it, how much the payoff difference influences the dynamics. Thus, for $\beta=0$, updating is random with probability $\varrho=1/2$, whereas for $\beta > 0$, strategies with lower costs are preferred.
\end{compactenum}

These processes are simulated until \emph{timeout}: once there have been $k$ update steps, we calculate the average welfare of the population (which is the average sum of payoffs), as well as the average count of how often the system visited the different states, and return. 

\begin{figure}[t]
  
  \centering
  \includegraphics[width=0.5\linewidth]{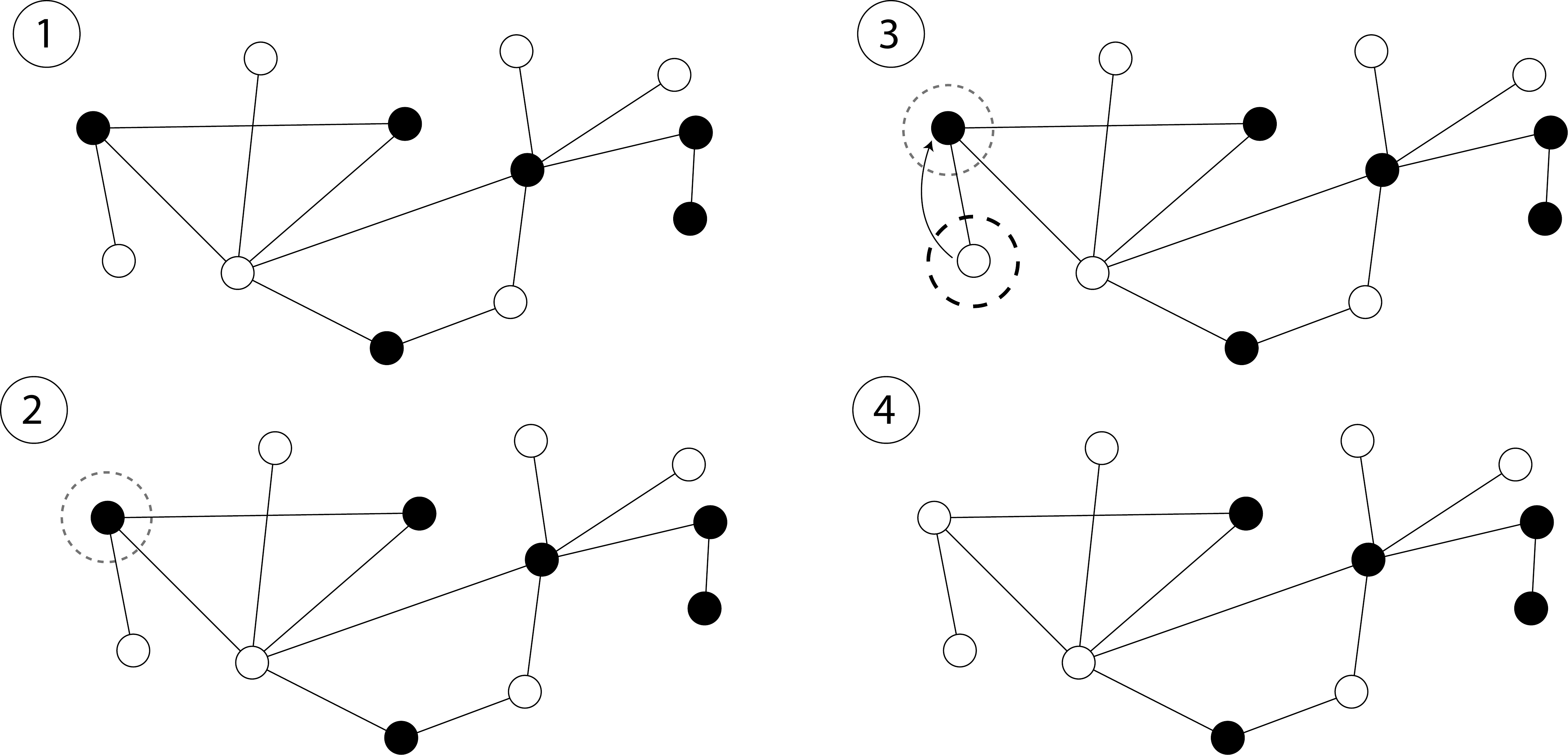}
  \caption{We illustrate the evolutionary dynamics given by the Moran Death-Birth process. Step~1: The nodes in the network shown use either Strategy~1 (white) or Strategy~2 (black). Step~2: One random node -- in our example, a black one -- is picked for death (visualized by the grey dashed line). Step~3: One of the nodes that neighbor the new vacancy is picked for reproduction, depending on its payoff. Here, this is a white node. Step~4: The reproducing node has passed on its strategy, such that there is one more white node on the graph.}
  \label{fig:moran}
\end{figure}

With this dynamic game, we have a model that does not use the assumption of nodes having full information or memory; at most, they need to compare their payoff with the payoffs of their neighbors. 

\subsection{The evolutionary price of anarchy}

In the analysis of an evolutionary game, a most fundamental question 
of interest is to which distribution different dynamics converge to, 
for various topologies and parameters. Such a stationary distribution of the Markov chain underlying the evolutionary dynamics contains the long-run frequencies of states, which correspond to the long-run probabilities of finding the system in the different states. Since we obtain an ergodic process on the space of all possible population compositions (hereon called configurations)  by our elementary updating rules, this stationary distribution of inoculation states exists and is unique. It is also called the \emph{selection-mutation equilibrium} $\textbf{x}$ of a given evolutionary process, and forms the basis of further analysis. We note that it is the nonzero mutation rate $\mu>0$ that provides ergodicity -- otherwise, the Markov chain would have absorbing states where all nodes are inoculated or none are. 

We can subsequently find the average social cost $\hat{S}$ for any of the dynamics we use, either by averaging over the total cost in each round (when the process is simulated), or multiplying $\textbf{x}$ with the vector $\textbf{R}$ containing the cost of all possible configurations, such that 

\begin{equation}
\label{eq:socialaverage}
    \hat{S}=\mathbf{x}\cdot\mathbf{R}=\sum_{i}x_i\,R_i, 
\end{equation}
\noindent where 
    $R_i= \sum_{j=1}^N\hat{\pi}_j^i$ ,
and $\hat{\pi}_j^i$ is the average payoff of player $j$ in configuration $i$.
We measure the efficiency of a process by comparing the average social cost 
$\hat{S}$ with the optimum $\Omega$. At this point, we introduce the concept of the \emph{evolutionary price of anarchy} $ePoA$ as the ratio of the average social cost (or payoffs) of a process against the social optimum.
Similarly to the static PoA, we hence define, for a particular evolutionary dynamics and assuming negative payoffs:

\vspace{-0.3cm}
\begin{equation}
\label{eq:epoa}
    ePoA = \hat{S}/\Omega.
\end{equation}

For positive payoffs, we define $ePoA = \Omega/\hat{S}$ and note that in both
cases, $ePoA\geq 1$, as the static PoA. In general, this quantity gives an indication which processes are most conducive to spending large fractions of time in configurations that minimize overall cost (or maximize positive payoffs).
We also note that in principle, the evolutionary PoA
can be both smaller or larger than the static PoA. 

Note that the concept of an evolutionary price of anarchy is neither bound to a particular game nor a particular evolutionary dynamics: it can serve as a general framework for analyzing a rich variety of dynamic games with arbitrary update dynamics.

\section{Results and Analysis}\label{sec:virusgame} 

In the following, we will consider pure strategies and the analysis of a setting 
with positive mutation rate  $\mu>0$.
We will first show how to exactly calculate the selection-mutation equilibrium $\textbf{x}$ of the evolutionary process for two fundamental graphs, and then use this to show how the ePoA can differ from a static analysis.

\subsection{Analytical results}

For simple graphs and pure strategies, $a_i \in \{0,1\}$, we can 
 calculate the stationary distribution of the underlying Markov chain under the different dynamics.
We consider two instructive cases here
(similar cases have been studied also in~\cite{Aspnes:2005:ISV:1070432.1070440}),
situated at the opposite ends of the spectrum: the \emph{clique} (which results in perfectly
global information) and the \emph{star} (which is fully local). In these cases, we find exact results. 

\subsubsection{Clique} 
\label{sec:clique}
In a clique, the $(N+1)$ states of the Markov chain are $i=0, \ldots, N$, denoting the number of currently inoculated nodes. Here, the threshold for infection of an arbitrary insecure node is 1, as an infected node automatically spreads the virus to all other not inoculated nodes. We use the entries of the cost vector $ct=[I,V,L=0]$ as the realized negative payoffs of infected ($I$), inoculated ($V$) and unaffected ($L=0$) nodes. For the expected payoffs $\hat{\pi}^i_X$ of nodes using strategy $X$, in the state $i$, with $X$ either $C\, (a_i=1)$ or $D\, (a_i=0)$,  we then find the following simple expressions:
\begin{equation}
    \hat{\pi}^i_C=V
\end{equation}
and
\begin{equation}
    \hat{\pi}^i_D=\frac{i}{N}L+\frac{N-i}{N}I=\frac{N-i}{N}I
\end{equation}

Meanwhile, the expressions for the transition probabilities including mutation (that is, with a mutation rate $\mu>0$) are as follows (for a derivation of Moran process transition probabilities, cf.~e.g.\cite{taylor2004evolutionary}):
\begin{equation}
\label{eq:cliquemuteq}
    P_{i,i+1}=\frac{N-i}{N}\mu\frac{1}{2} + (1 - \mu) p_{i,i+1}
\end{equation}
and
\begin{equation}
    P_{i,i-1}=\frac{i}{N}\mu\frac{1}{2} + (1 - \mu) p_{i,i-1} 
\end{equation}

The terms $p_{i,j}$ generally depend on the dynamics being used. One caveat is that the virus inoculation game leads to expected payoffs $\hat{\pi}^i_X \leq 0$. To be able to plug these terms into the equations for the Moran process probabilities, we use the standard assumption of an \emph{exponential fitness function} (see~\cite{traulsen2008analytical}): expected payoffs are mapped to a fitness with the function $F(x)=e^x$, such that the fitness becomes
\begin{equation}
    f^i_X=e^{s\hat{\pi}^i_X}.
\end{equation} We subsequently set the parameter $s=1$, as is common in the literature. This quantity is now always positive, is larger for smaller costs (or equivalently, larger payoffs), and can be used in the standard Moran probabilities 
(cf.~\cite{nowak06} and Eqns.~\ref{eq:completemoran1}-~\ref{eq:completemoran2} in the Appendix).

Meanwhile, for the pairwise comparison -- imitation dynamics, we can still use the payoffs themselves without transforming them, and get Eqns.~\ref{eq:pcclique1} and~\ref{eq:pcclique2} (see Appendix).

From these transition probabilities, we can calculate the stationary distribution of the process: it is the normalized left eigenvector of the transition matrix, which is the tridiagonal matrix $\mathbf{P}$, see Appendix section~\ref{sec:appendcliquemarkov}.

The mutation-selection equilibrium is then the solution to 
\[
\mathbf{x P}=\mathbf{x}.
\]
It gives the long-run frequencies of all possible states. These frequencies can also be obtained by simulating the process for long enough and counting how often each state occurs. 

To be able to compare the evolutionary price of anarchy with the static price of anarchy, we first need to describe the Nash equilibria of the system. For the complete graph, by using Corollary 1.1, the static analysis predicts one equivalence class of Nash equilibria, $\mathcal{N}$, where exactly $N-t=N-VN/I$ nodes are inoculated. We denote the efficiency of these equilibria as $PoA$, the static price of anarchy.

In order to calculate $ePoA$, we first calculate the average social cost $\hat{S}$. We do so either by averaging over the total cost in each round in our simulations, or taking $\hat{S}=\mathbf{x}\mathbf{R}$ (cf.~Eq.~\ref{eq:socialaverage}). For the complete graph, the vector $\mathbf{R}$ containing the total system cost in all possible configurations, with $i=0,...,N$ inoculated nodes, has the components

\begin{equation}
	R_i=i \hat{\pi}^i_C + (N-i) \hat{\pi}^i_D = i V + (N-i) \frac{N-i}{N}I.
\end{equation}

We also know the cost of the optimum; it is attained in the state with $i^*=\frac{N (2 L-V)}{2 L}$, which is the number of inoculated nodes where $R_i^*=\max_{i} R_i$ holds. The optimal cost is then $\Omega=i^* V +  \frac{(N-i^*)^2}{N}I$. With this, we can use Eq.~\ref{eq:epoa} to measure the efficiency of the different dynamics by finding their corresponding evolutionary price of anarchy as $\hat{S}/\Omega$. We present our analysis below.

Using our evolutionary process, our findings can now be summarized in the following lemma:

\begin{lemma}
\label{lemma:completenash}
For a fixed cost vector $ct=[V,I,0]$, large $N\gtrapprox 30$, any reasonable mutation rates $0< \mu < 0.5$, and intermediate to large selection strength $\beta>1$ (for the pairwise comparison process), we always recover the predicted equivalence class of Nash equilibria, $\mathcal{N}$, as the most abundant state in the selection-mutation equilibrium of both types of processes. That is, the process spends the largest fraction of time in the Nash equilibria, where exactly $t=VN/I$ nodes are inoculated.
\end{lemma}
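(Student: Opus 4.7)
The proof plan rests on the observation that under each of the three update dynamics, the induced process on the configuration space $\{0,1,\ldots,N\}$ is a birth--death Markov chain. By detailed balance, its unique stationary distribution satisfies
\[
\frac{x_{i+1}}{x_i} \;=\; \frac{P_{i,i+1}}{P_{i+1,i}},
\]
so $x$ is unimodal whenever this ratio is monotone in $i$, and its mode is the largest index $i^{\star}$ where the ratio exceeds~$1$. The task therefore reduces to locating this crossover point and showing that it coincides, up to an additive constant, with the Nash configuration identified via Corollary~1.1.

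Using the decomposition in Eq.~\ref{eq:cliquemuteq}, I would split each $P_{i,i\pm 1}$ into a mutation part of order $\mu(N-i)/N$ or $\mu i/N$ and a selection part $(1-\mu)\,p_{i,i\pm 1}$. For the two Moran variants, plugging in the exponential fitnesses derived from $\hat{\pi}^i_C = V$ and $\hat{\pi}^i_D = (N-i)I/N$ yields a pure-selection ratio $p_{i,i+1}/p_{i+1,i}$ that is a strictly monotone function of $i$ and equals~$1$ precisely when $\hat{\pi}^i_C = \hat{\pi}^i_D$, i.e., at the equilibrium state. The analogous calculation for the pairwise comparison process with Fermi weights $\varrho$ identifies the same crossover point. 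Hence in the selection-only limit $\mu \to 0$, the mode of $x$ lies exactly on the equilibrium class $\mathcal{N}$.

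Next, I would quantify how a positive mutation rate perturbs this peak. The mutation-only component of the ratio equals $(N-i)/(i+1)$, which in isolation would pull the mode toward $i = N/2$. For a state displaced by $\Delta$ from the equilibrium configuration, however, the selection ratio behaves like $\exp(\pm c\,\Delta I/N)$, with $c = s = 1$ for the Moran dynamics and $c = \beta$ for pairwise comparison. Substituting into the product formula $x_i = x_0 \prod_{j=0}^{i-1} P_{j,j+1}/P_{j+1,j}$ amplifies these per-step gaps into an overwhelming preference for states adjacent to the equilibrium, so the mode remains within $O(1)$ of it once $N$ is moderately large --- which is precisely what the threshold $N \gtrapprox 30$ in the lemma captures.

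The main technical obstacle is controlling the competition between the mutation drift toward $N/2$ and the selection drift toward the Nash state: if the equilibrium lies far from $N/2$ and $\mu$ is close to $0.5$, the two effects are a priori comparable. The heart of the argument is to verify that the exponential-in-payoff-gap selection terms strictly dominate the bounded mutation ratio at every state displaced from the equilibrium, which ultimately reduces to an inequality of the form $(1-\mu)\,p_{i,i+1}/p_{i+1,i} > \mu\, i/(N-i)$. Checking this uniformly in the stated regime $\mu \in (0,0.5)$, $\beta \geq 1$ requires careful tracking of constants, especially when the equilibrium index approaches the boundary values $0$ or $N$, and this is where I expect most of the bookkeeping effort to be concentrated.
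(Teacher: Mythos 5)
Your route is genuinely different from the paper's. The paper does not carry out a detailed-balance analysis at all: its justification of this lemma is an informal argument (on a clique, local information coincides with global information; $\mu>0$ makes the chain ergodic, so a unique stationary distribution exists; that distribution ``places most weight'' on the Nash configuration because no node has an incentive to switch there, with the peak sharpening as $N$ grows and $\mu$ shrinks), backed by numerically computing the stationary distribution as the normalized left eigenvector of the tridiagonal transition matrix. Your skeleton --- the chain on $\{0,\dots,N\}$ is a reversible birth--death chain, so $x_{i+1}/x_i=P_{i,i+1}/P_{i+1,i}$, the distribution is unimodal when this ratio is monotone, and its mode sits where the ratio crosses $1$, which for the pure-selection part happens at the indifference point $\hat{\pi}^i_C=\hat{\pi}^i_D$, i.e.\ $i=N-t$ --- is the right analytical scaffolding, and if completed it would be strictly stronger than what the paper actually provides.

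That said, the core step of your plan is not established, and as stated it fails. You assert that the selection terms ``strictly dominate the bounded mutation ratio at every state displaced from the equilibrium.'' But the selection probabilities $p_{i,i\pm1}$ carry a factor proportional to $i(N-i)$ (an imitation step needs both an inoculated and an insecure node to be involved), which vanishes at the boundary states, while the mutation terms $\tfrac{\mu}{2}\tfrac{N-i}{N}$ and $\tfrac{\mu}{2}\tfrac{i}{N}$ do not; near $i=0$ or $i=N$ mutation dominates no matter how large $\beta$ or $s$ is. More importantly, the per-step selection advantage at displacement $\Delta$ from $N-t$ is only $e^{O(\beta|I|\Delta/N)}=1+O(1/N)$, while the mutation drift toward $N/2$ contributes a per-step ratio $(N-i)/(i+1)=1+O(1/N)$ as well; when $N-t$ is far from $N/2$ and $\mu$ approaches $0.5$ these two $O(1/N)$ effects genuinely compete, and which wins depends on $V/I$, $\beta$, and $\mu$. (Note that the paper's own numerical example uses $V/I=1/2$, for which $N-t=N/2$ and the two drifts coincide, so it does not probe this competition.) Your proposed reduction to $(1-\mu)\,p_{i,i+1}/p_{i+1,i}>\mu\, i/(N-i)$ is also not the correct condition, since the ratio of the two sums in $P_{i,i+1}/P_{i+1,i}$ does not decouple that way. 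In short, the step that would distinguish ``small $\mu$'' from ``any $\mu<0.5$'' --- which is the actual content of the lemma's quantifier --- is named in your plan but not proved, and the uniform-dominance claim you would rest it on is false near the boundaries; the paper resolves this same issue only by numerics.
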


We note that there is also substantial weight on neighboring states of the Nash equilibria (with $t\pm i$, where $i \in \{1,2,3,...\}$) with worse total welfare, due to the stochasticity of the process. However, the average social cost $\hat{S}$ is not substantially different from the cost of the Nash equilibria, since the weight on neighboring states is symmetrically distributed. The numerical results these insights are based on are provided via Figs.~\ref{fig:completedist},~\ref{fig:robustmut} and~\ref{fig:robustn}, as well as upon request in full.

\begin{figure}[h]
  \centering
  \includegraphics[width=0.4\linewidth]{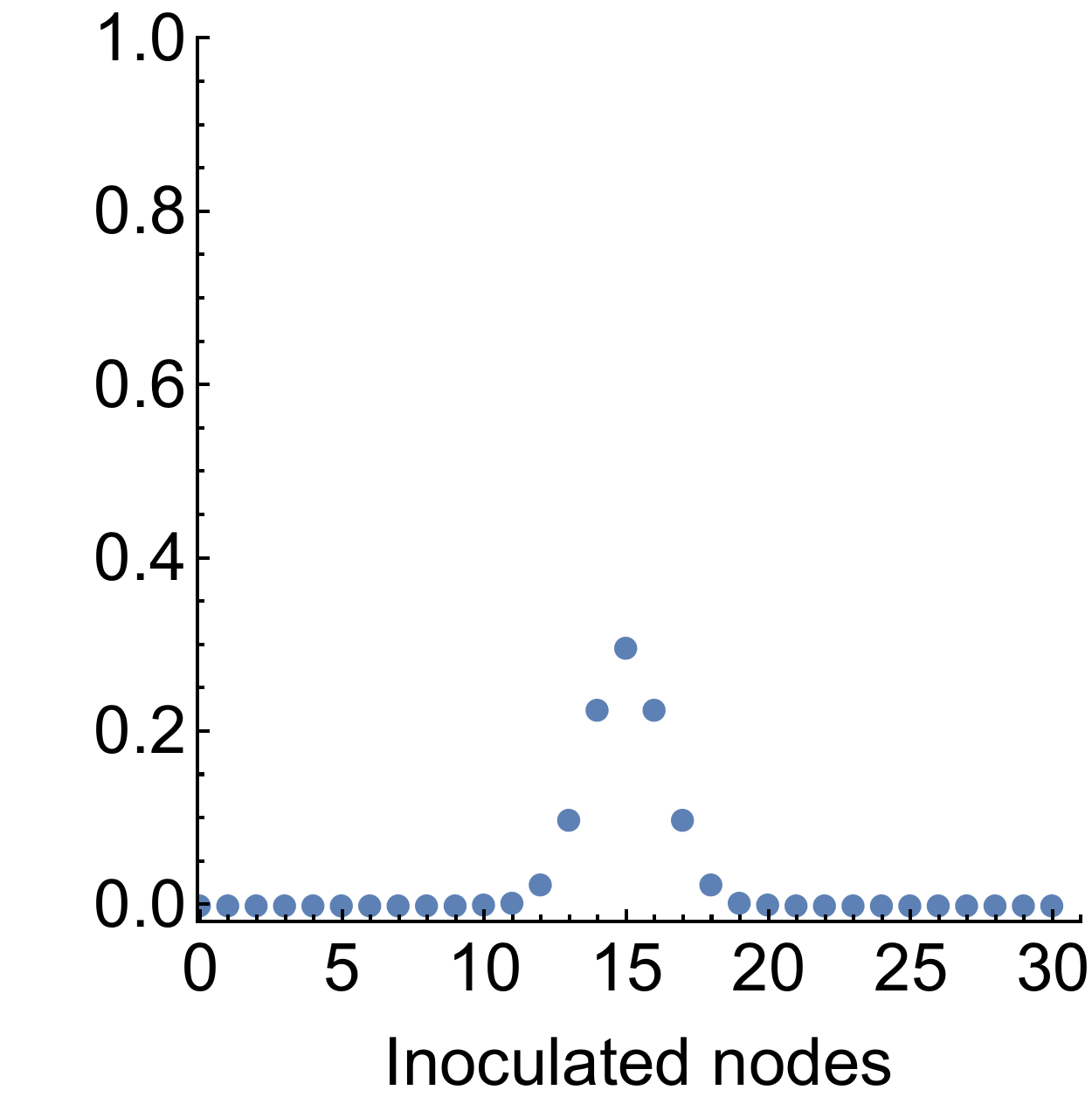}
  \caption{Stationary distribution for the pairwise comparison process on a clique, with $N=30$, $V/I=1/2$ and $\mu=0.001$. We show the number of inoculated nodes (corresponding to the states of the Markov chain) on the x-axis, and their weight in the invariant distribution on the y-axis. The Nash equilibrium, with $N-t=15$ inoculated nodes, is the most abundant state, even though there is significant weight on the neighboring states. The Moran processes lead to qualitatively similar results, but with a stronger dependence on $N$.}  
  \label{fig:completedist}
\end{figure}

What this means for the evolutionary price of anarchy is expressed in the following corollary.

\begin{corollary}[Evolutionary Price of Anarchy for Cliques.]
\label{cor:epoaclique}
The evolutionary price of anarchy $ePoA_{\text{Clique}}$ in a clique with $N$ nodes approaches the efficiency of the Nash equilibrium (the price of anarchy $PoA_{\text{Clique}}$), as $\mu \rightarrow 0$ and $N$ grows large, such that $\lim_{N \to \infty} |ePoA_{\text{Clique}}-PoA_{\text{Clique}}|=0$. 
\end{corollary}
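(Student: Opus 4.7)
The plan is to reduce the claim to a concentration statement about the stationary distribution $\mathbf{x}$ together with a Taylor expansion of the (quadratic) social cost $R_i = iV + (N-i)^2 I/N$ about the Nash state $i^{NE}=N-t$, with Lemma~\ref{lemma:completenash} as the driving input. First, I would compute the normalizing quantities directly: minimizing $R_i$ over $i$ gives $\Omega = NV(4I-V)/(4I) = \Theta(N)$, and $R_{i^{NE}} = (N-t)V + t^2 I/N = NV$, so $PoA_{\text{Clique}} = R_{i^{NE}}/\Omega = 4I/(4I-V)$ is an $N$-independent constant depending only on $V/I$. Rewriting
\begin{equation}
|ePoA_{\text{Clique}} - PoA_{\text{Clique}}| = \frac{|\hat{S} - R_{i^{NE}}|}{\Omega},
\end{equation}
it therefore suffices to prove $|\hat{S} - R_{i^{NE}}| = o(N)$ in the stated limit. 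Since $R_i$ is exactly quadratic in $i$, expansion around $i^{NE}$ gives
\begin{equation}
\hat{S} - R_{i^{NE}} = R'(i^{NE})\,\mathbb{E}_{\mathbf{x}}[\,i-i^{NE}\,] + \frac{I}{N}\,\mathbb{E}_{\mathbf{x}}[\,(i-i^{NE})^2\,],
\end{equation}
where $R'(i^{NE}) = V - 2tI/N = -V$, so the goal becomes a first- and second-moment bound on $i-i^{NE}$ under $\mathbf{x}$.

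Second, I would establish Gaussian-type concentration of $\mathbf{x}$ about $i^{NE}$ by exploiting the birth--death structure of the chain. The detailed-balance identity $x_{i+1}/x_i = P_{i,i+1}/P_{i+1,i}$, combined with the fact (Lemma~\ref{lemma:completenash}) that $i^{NE}$ is the unique interior mode (so the log-ratio $\log(P_{i,i+1}/P_{i+1,i})$ is positive for $i < i^{NE}$ and negative for $i > i^{NE}$), lets me sum $\log(x_{i+1}/x_i)$ over a neighborhood of $i^{NE}$ and obtain a bound $x_{i^{NE}+k} \leq C\, e^{-c k^2}$ for constants $c,C>0$ depending on $V$, $I$, $\beta$, $\mu$ but independent of $N$. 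Consequently $\mathbb{E}_{\mathbf{x}}[(i-i^{NE})^2] = O(1)$, so the quadratic contribution to $\hat{S} - R_{i^{NE}}$ is $O(1/N)$.

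Third, to bound the first-moment term I would exploit a near-symmetry of the selection-driven transitions around $i^{NE}$. A direct calculation using the (exponential) fitnesses $f_C = e^{-V}$ and $f_D^i = e^{-(N-i)I/N}$ shows that the payoff gap $\hat{\pi}^i_C - \hat{\pi}^i_D$ evaluated at $i^{NE}+k$ equals $-kI/N$, while at $i^{NE}-k$ it equals $+kI/N$; so the selection ratios satisfy $P_{i^{NE}+k,\,i^{NE}+k+1}/P_{i^{NE}+k+1,\,i^{NE}+k} = (P_{i^{NE}-k-1,\,i^{NE}-k}/P_{i^{NE}-k,\,i^{NE}-k-1})^{-1}$ up to combinatorial factors of $1 + O(1/N)$ and mutation corrections of order $\mu$. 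This near-symmetry telescopes into $\mathbb{E}_{\mathbf{x}}[i - i^{NE}] = O(1)$, so the first-order term contributes at most $|R'(i^{NE})|\cdot O(1) = O(1)$. Combining, $|\hat{S} - R_{i^{NE}}| = O(1) = o(N)$ and dividing by $\Omega = \Theta(N)$ delivers the claim.

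The main obstacle is precisely this first-moment control. Because $R'(i^{NE}) = -V \neq 0$ (the Nash state is not a social optimum), a bias of order $N$ in $\mathbb{E}_{\mathbf{x}}[i]$ would produce a $\Theta(N)$ shift in $\hat{S}$ and break the conclusion, so the near-symmetry sketch above must be carried out with care: pairing states equidistant from $i^{NE}$ and tracking the $O(1/N)$ combinatorial and $O(\mu)$ mutation corrections in Eq.~\ref{eq:cliquemuteq} uniformly in the regime where Lemma~\ref{lemma:completenash} applies. A related subtlety is that as $\mu\to 0$ the selection-only chain becomes reducible with absorbing states at $i=0$ and $i=N$; Lemma~\ref{lemma:completenash} is what ensures that $x_{i^{NE}}$ dominates the boundary weights throughout the double limit, so the concentration bound of the second paragraph survives with the constants $c,C$ remaining bounded away from zero.
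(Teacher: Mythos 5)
Your route is genuinely different from the paper's, and considerably more rigorous. The paper ``proves'' this corollary only by an informal verbal argument (ergodicity of the chain for $\mu>0$, the claim that the stationary distribution peaks at the Nash state, and the remark that the weight on neighbouring states is symmetrically distributed so that $\hat S$ stays close to $R_{i^{NE}}$), supported by the numerical results in the figures. Your decomposition --- computing $\Omega=NV(4I-V)/(4I)$, $R_{i^{NE}}=NV$ and $PoA=4I/(4I-V)$ explicitly, reducing the claim to $|\hat S - R_{i^{NE}}|=o(N)$, and then expanding the exactly quadratic $R_i$ about $i^{NE}$ --- is a real proof skeleton that the paper does not supply, and the detailed-balance identity for the birth--death chain is the right tool for the concentration step. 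What each approach buys: the paper's argument is short and matches the simulations but is not a proof; yours isolates exactly what must be shown (a sub-linear moment bound under $\mathbf{x}$) and is in principle checkable.

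There is, however, one quantitatively wrong step. Your claimed bound $x_{i^{NE}+k}\leq C e^{-ck^2}$ with $c$ independent of $N$ cannot hold: adjacent states differ in the relevant payoff gap only by $I/N$ (indeed you compute the gap at $i^{NE}+k$ to be $-kI/N$), so $\log(P_{i,i+1}/P_{i+1,i})=\Theta(k/N)$ and the telescoped sum gives $x_{i^{NE}+k}\approx x_{i^{NE}}e^{-c k^2/N}$, i.e.\ fluctuations of order $\sqrt{N}$, not $O(1)$. This is consistent with the paper's own observation of ``substantial weight on neighboring states.'' The error is repairable and does not sink the conclusion: $\mathbb{E}_{\mathbf{x}}[(i-i^{NE})^2]=O(N)$ still makes the quadratic term $O(1)$, and Cauchy--Schwarz gives $|\mathbb{E}_{\mathbf{x}}[i-i^{NE}]|=O(\sqrt N)=o(N)$ directly, which after dividing by $\Omega=\Theta(N)$ yields the claim. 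In fact this renders your entire third paragraph (the delicate pairing/near-symmetry argument for the first moment) unnecessary --- any $o(N)$-scale concentration around $i^{NE}$ suffices, and worrying about an $O(1)$ versus $O(\sqrt N)$ bias is beside the point. You should restate the concentration at the correct $\sqrt N$ scale and keep the (correct) caveat about the $\mu\to 0$, $N\to\infty$ double limit, since for fixed $\mu$ the mutation terms in Eq.~\ref{eq:cliquemuteq} do bias the mode away from $i^{NE}$.
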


A straightforward argument shows why Lemma~\ref{lemma:completenash} and Corollary~\ref{cor:epoaclique} hold. 
For the base case of the unstructured population on a complete graph, the perfect information setting corresponds to the local information setting, as each node only has distance 1 from every other node. Furthermore, the Markov chain underlying the stochastic evolutionary process is ergodic by $\mu >0$, such that there exists a stationary distribution and the process converges to it. This stationary distribution then places most weight on the Nash equilibrium described in~\cite{Aspnes:2005:ISV:1070432.1070440}, as it is the configuration where nodes have no incentive to switch their strategy, and are aware of this fact, just as in the static case. The stochastic noise becomes smaller as the number of nodes grows larger (which inhibits the system spending too much time in the two extremal states of all nodes inoculated and all nodes insecure), and as the mutation rate becomes smaller (which sharpens the peak of the distribution at the equilibrium). This lets us recover the equilibrium results of~\cite{Aspnes:2005:ISV:1070432.1070440}.

\subsubsection{Star graph} 
\label{sec:star}
For star graphs $K_{1,N-1}$, we can also numerically compute the expected payoffs and the Markov chain properties of the process. The $2N$ states in this case are of the form $(t,l)$, $t \in \{0,1\}$ and $l \in \{0, \ldots, N\!-\!1\}$. The parameter $t$ denotes the inoculation state of the center node, whereas $l$ gives the number of inoculated leaf nodes. 

By using Corollary 1.1 again, we find two equivalence classes of Nash equilibria: class $\mathcal{N}_1$ has $N-t=N-\left \lfloor{VN/I}\right \rfloor$ inoculated leaf nodes (which in our notation is the state $(0,N-t)$), whereas $\mathcal{N}_2$ contains the optimal equilibrium, which features the center as the only inoculated node (state $(1,0)$). We will show that for this highly structured population, the outcome can be quite different from the predictions of the static analysis of the one-shot game. Local evolutionary processes prevent the system from spending too much time in either of the equilibria classes. We detail this in the following paragraphs.

To see this, we first compute the expected payoffs of leaf nodes ($\hat{\pi}^{t,l}_X$, with $X \in \{C,D\}$) and the center node ($\hat{\pi}^{t,l}_{Center}$) in the configurations $(t,l)$:

\begin{equation}
    \hat{\pi}^{t,l}_C=\hat{\pi}^{1,l}_{Center}=V,
\end{equation}
\begin{equation}
    \hat{\pi}^{0,l}_D=\hat{\pi}^{0,l}_{Center}=\frac{N-l}{N}I + \frac{l}{N}L = \frac{N-l}{N}I,
\end{equation}
and
\begin{equation}
    \hat{\pi}^{1,l}_D=\frac{N-1}{N}L + \frac{1}{N}I = \frac{1}{N}I.
\end{equation}

For the Moran and pairwise comparison -- imitation dynamics we derive probabilities $p^{k,m}_{n,o}$ that describe the transitions $(k,n)\rightarrow (m,o)$ without mutation, again with fitness $f^{t,l}_X=e^{\hat{\pi}^{t,l}_X}$. The exact expressions can be found in Section~\ref{sec:appendstartrans}.

We can now again get the transition matrix, with its entries $P^{0,1}_{l,l}$ (Eqns.~\ref{eq:trmatstar1}-~\ref{eq:trmatstar2} in the Appendix), and subsequently the selection-mutation equilibrium $\mathbf{x}$ of the process with its corresponding average system cost. 

Having calculated/simulated the stationary distribution, we observe the following (for results, see Appendix):
\begin{compactitem}
    \item No matter the network size or the process, the Nash equilibria in $\mathcal{N}_1$, $\mathcal{N}_2$ are not abundant states in the stationary distribution, with up to a factor $10^3$ in weight difference to the more frequent states. We instead find a high abundance of costly non-equilibrium states $\mathcal{X}=\{(0,N-t-i)\}$ for some integers $t > i > 0$. There is also substantial weight on the beneficial configurations with low cost $(1,N-t-i)$ for the same values of $i$. 
    \item The equilibrium $\mathcal{N}_1=(0,N-t)$ is typically of far lower frequency than the non-equilibrium states .
    But it still plays more of a role overall than the optimum $\mathcal{O}=\mathcal{N}_2=(1,0)$, which is a rare state at stationarity and almost never visited in the long run.
    
\end{compactitem}

We will now argue why the process exhibits this off-equilibrium behavior. 
First of all, starting from the above observations, it is straightforward to show why the optimum, that is, the state $\mathcal{O}=(1,0)$, cannot be an abundant state in the mutation-selection equilibrium of a star graph.
\begin{lemma}[The optimal Nash equilibrium is rare in the mutation-selection equilibrium.] 
Consider a star graph, with fixed but arbitrary number of nodes $N$. For arbitrary mutation rates $\mu>0$, arbitrary $|V|<|I|$, and any of the three evolutionary processes we consider, the optimal Nash equilibrium $\mathcal{O}=(1,0)$ cannot be an abundant state in the mutation-selection equilibrium.
\end{lemma}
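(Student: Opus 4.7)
The plan is to prove non-abundance of $\mathcal{O}=(1,0)$ via a global-balance (probability-flux) argument: I will bound the per-step outgoing flux from $(1,0)$ from below by a constant $c_{\mathrm{out}}>0$ independent of $\mu$, and the total incoming flux from above by $\mu\,c_{\mathrm{in}}$, forcing the stationary mass to satisfy $x_{(1,0)}=O(\mu)$. This is then compared with a state in $\mathcal{X}$ that carries $\Omega(1)$ mass. The approach sidesteps writing out the full $2N\times 2N$ transition matrix and lets me handle the three update rules with a single scheme.

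For the outgoing flux, I enumerate the non-mutation transitions leaving $(1,0)$. In Moran DB every step changes the state: with probability $1/N$ the center dies and, since all $N-1$ leaves are $D$, is replaced by $D$, yielding $(0,0)$; with probability $(N-1)/N$ a leaf dies and its unique neighbor (the $C$ center) reproduces, yielding $(1,1)$. Moran BD produces the symmetric split for reproduction, again always changing the state. For pairwise comparison, the fact that $(1,0)$ is a Nash equilibrium (apply Corollary~1.1 to the attack graph, whose components are $N-1$ singletons of size $1\le t=VN/I$) forces $V\ge I/N$, so the center strictly prefers to copy any leaf role-model; the outgoing rate is then at least $\tfrac{1}{N}\varrho(V-I/N)$, bounded below by a positive constant in $\beta\ge 1$. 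For the incoming flux, the key structural observation is that $(1,0)$ differs from every other macro-state in more than one coordinate except for $(0,0)$ (which differs only at the center) and $(1,1)$ (which differs in exactly one leaf). Since each non-mutation step updates a single node, only these two states can reach $(1,0)$ without a mutation. But from $(0,0)$ every copy event is trivial $D\to D$, and from $(1,1)$ every non-mutation update preserves or increases the $C$-count (in DB a dying leaf is replaced by the $C$ center; in BD and PC removing the $C$ leaf would first require demoting the center, which yields $(0,1)$ not $(1,0)$). Hence every incoming transition uses a mutation event and the total incoming flux is $O(\mu/N)$.

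Combining via the global-balance equation at $\mathcal{O}$,
\begin{equation}
x_{(1,0)}\,(\text{outgoing rate from }(1,0)) \;=\; \sum_{j\ne (1,0)} x_j\,P(j\to (1,0)) \;\le\; \mu\,c_{\mathrm{in}},
\end{equation}
yields $x_{(1,0)}\le \mu\,c_{\mathrm{in}}/c_{\mathrm{out}}=O(\mu)$, which for small $\mu$ is dwarfed by any state in $\mathcal{X}$ with $\Omega(1)$ weight; for large $\mu$ the stationary distribution on full configurations approaches uniform, and the macro-state $(1,0)$ contains only one of $2^N$ configurations, while e.g.\ $(0,\lfloor (N-1)/2\rfloor)$ contains $\Theta(2^N/\sqrt{N})$, so $(1,0)$ is again not abundant. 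The main obstacle I anticipate is the pairwise-comparison case when $\beta$ is small: the sigmoid $\varrho$ barely distinguishes payoffs and I must argue carefully that the outgoing rate stays bounded away from $0$, using $\varrho(x)\ge 1/2$ for $x\ge 0$ together with $V\ge I/N$ at $(1,0)$. A secondary subtlety is making the $\Omega(1)$ lower bound for some state in $\mathcal{X}$ rigorous uniformly across the three dynamics; a coupling with the $\mu=0$ chain, whose deterministic drift carries $(1,l)$-type configurations toward $\mathcal{X}$, should suffice.
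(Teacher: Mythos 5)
Your proposal is correct and rests on exactly the same observation as the paper's proof: the only macro-states adjacent to $(1,0)$ are $(0,0)$ and $(1,1)$, neither can transition into $(1,0)$ without a mutation event (the former has no inoculated node to copy, the latter's inoculated leaf sees only the inoculated center), while leaving $(1,0)$ requires no mutation --- you merely make the paper's informal ``more likely to leave than to enter'' conclusion precise via the global-balance equation and add the large-$\mu$ case. One small slip: from $(1,1)$ under Moran DB the center \emph{can} be demoted (yielding $(0,1)$), so non-mutation updates do not always preserve the $C$-count; but this does not affect the argument, since the relevant fact is only that the inoculated leaf cannot be demoted.
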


\begin{proof} In fact, the equilibrium is not even reached if not for mutation. 
To see this, consider the states $a=(0,0)$ and $b=(1,1)$, and suppose $\mu=0$. While both these states only differ in one node's strategy from the Nash equilibrium, they cannot serve as starting points for a transition. State $a$ is absorbing in this scenario, as there is no inoculated node to pass on its strategy. Meanwhile, in the state $b$, the one inoculated leaf node cannot change its strategy without the center node being not inoculated -- the terms $p^{1,1}_{l,l-1}$ are always zero. This however leads to the opposite of the Nash equilibrium we are trying to reach. Thus, only a nonzero mutation rate can provide access to this state. At the same time, it is clear that the transitions from $\mathcal{O}$ do not need mutation to reach the neighboring states $a$ and $b$, which leads to a higher probability to leave the state than to reach it. This makes $\mathcal{O}$ unsustainable in the long run.  
\end{proof}

The following lemma states the corresponding result for the other Nash equilibria.
\begin{lemma}[The Nash equilibria from the equivalence class $\mathcal{N}_1$ cannot form an abundant state in the mutation-selection equilibrium.]
\label{offequilibrium}
Consider a star graph, with fixed but arbitrary number of nodes $N$. For arbitrary mutation rates $\mu$, arbitrary $|V|<|I|$, and any of the three evolutionary processes we consider, the Nash equilibria of the form $(0,N-t)$ cannot be an abundant state in the mutation-selection equilibrium. Instead, pairs of states of the form $(0,N-t-i)$ and $(1,N-t-i)$, $i \in \mathcal{I}$ with the set $\mathcal{I} \subseteq \mathbb{N}_0$ depending on $N, V, I, \mu$ and the evolutionary dynamics used, act as sinks of the process. 

\end{lemma}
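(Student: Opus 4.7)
The plan is to establish two complementary facts: (a) the Nash configuration $(0,N-t)$ sits at a point of neutral selection from which the process drifts away, while the reverse drift requires a mutation event, so $(0,N-t)$ is rare; and (b) for $i\geq 1$ the pair $(0,N-t-i),(1,N-t-i)$ constitutes a trap, because the center's strategy flips quickly under strong selection, whereas the leaf count $l$ changes only slowly.

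First I would substitute $l=N-t$ into the payoff formulas of Sec.~\ref{sec:star}. Since $t=VN/I$, the insecure center, the inoculated leaves, and the insecure leaves all pay exactly $V$, so every pairwise fitness ratio equals one. Consequently, in each of the Moran~BD, Moran~DB and pairwise-comparison updates (independent of $\beta$) the selection component is uniform at $(0,N-t)$. I would then compute, for each dynamics, the rate of $(0,N-t)\to(0,N-t-1)$ (which is $\Theta(1/N)$ or larger, since it happens by purely neutral drift) and the rate of $(0,N-t-1)\to(0,N-t)$: on a star graph a leaf's only neighbor is the center, which at $(0,N-t-1)$ is still insecure, so an insecure learner can become inoculated only through a mutation event of order $\mu$. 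A flux-ratio bound of the form
\begin{equation*}
\pi(0,N-t)\;\lesssim\;\frac{\mu\,t}{N-t}\;\pi(0,N-t-1)
\end{equation*}
then shows that the Nash state carries vanishing stationary weight.

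To identify the sinks I would analyze $(0,N-t-i),(1,N-t-i)$ for $i\geq 1$. Here the center pays $(t+i)I/N>V$ when insecure and $V$ when inoculated, while inoculated leaves pay $V$ and insecure leaves pay $I/N<V$. Thus at $(0,N-t-i)$ the center is selected strongly to imitate an inoculated leaf, producing a rapid transition to $(1,N-t-i)$; symmetrically, at $(1,N-t-i)$ the center is selected to imitate an insecure leaf, returning to $(0,N-t-i)$. The leaf count, however, can change only through an update that a leaf performs against its single neighbor (the center), and in each branch of the oscillation that update is selection-disfavored (contributing a factor $e^{-\beta\,\Theta(1)}$ under pairwise comparison, or a correspondingly small Moran weight) or outright requires mutation. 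The pair is therefore visited on a much longer time scale than it is escaped, i.e., it acts as a sink in the sense of the lemma.

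The main obstacle will be turning these qualitative flux statements into a uniform bound on the stationary distribution of the underlying \emph{non-reversible} Markov chain with $2N$ states. I plan to exploit its block-tridiagonal structure, with $l$ indexing an outer block and the center status $c\in\{0,1\}$ an inner $2\times 2$ block, collapse each inner block via a quasi-stationary argument, and then bound $\pi(0,N-t)/\pi(c,N-t-i)$ by a telescoping product of transition-probability ratios along a carefully chosen path from the Nash state into the sink. The set $\mathcal{I}$ then emerges as the set of $l<N-t$ at which the effective inter-block rates balance; its precise form depends on the closed expressions for $p^{k,m}_{n,o}$ given in Section~\ref{sec:appendstartrans} and on whether the update rule is Moran (BD or DB) or pairwise comparison.
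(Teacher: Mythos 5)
Your mechanism is essentially the one the paper uses: neutral selection at $(0,N-t)$ (all payoffs equal $V$ there, since $t=VN/I$), the one-way nature of leaf imitation on a star (a leaf can only copy the center, so $p^{0,0}_{l,l+1}=p^{1,1}_{l,l-1}=0$), the center flip-flopping between the two levels of the ladder, and a critical index where competing rates balance --- these correspond exactly to the paper's three ``facts'' about the ratios $r_i/u_{(i,i+1)}$, $s_i/q_{(i,i-1)}$ and $u_{(i,i+1)}/q_{(i,i-1)}$. Your plan to collapse each $2\times 2$ block quasi-stationarily and study the effective chain on $l$ is a reasonable, arguably cleaner packaging of the same analysis, and your part (b) matches the paper's identification of the sink pairs near $i^{(2)}_{crit}$.

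There is, however, a concrete gap in your part (a). The bound $\pi(0,N-t)\lesssim \frac{\mu t}{N-t}\,\pi(0,N-t-1)$ does not follow from the observation that the direct transition $(0,N-t-1)\to(0,N-t)$ requires mutation, because the chain is a ladder, not a line: there is a mutation-free return path $(0,N-t-1)\to(1,N-t-1)\to(1,N-t)\to(0,N-t)$, in which the center inoculates, passes its strategy to an insecure leaf via $p^{1,1}_{l,l+1}$, and then reverts. That path is selection-suppressed but its rate is of order $1$, not of order $\mu$, so the Nash state's stationary weight is not $O(\mu)$ relative to its neighbor. What actually prevents the return is the paper's third fact: for small $i$ the ratio $u_{(i,i+1)}/q_{(i,i-1)}>1$, i.e.\ continuing outward on the upper level beats retreating on the lower level, and this only reverses at $i^{(2)}_{crit}$, which is where the sinks form. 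You need this comparison already in part (a) --- it is precisely the ``effective inter-block rate balance'' you defer to your final paragraph --- and without it the claim that $(0,N-t)$ is rare is unsupported. To be fair, the paper's own proof is itself an informal transition-ratio argument rather than a rigorous bound on the stationary distribution of the non-reversible chain, so the ``main obstacle'' you identify is not actually overcome there either.
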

\begin{proof}
\noindent We use three facts arising from the transition matrix of the process. We argue here for the case of the pairwise comparison process, as it exhibits behavior that is less dependent on network size $N$, but argumentation for the Moran BD scenario is very similar:
\begin{compactenum}
    \item  As the distance $i$ increases, the probabilities $u_{(i,i+1)}$ and $r_{i}$ decrease for $i \geq 1$, as it becomes more unlikely for an insecure node to pass on its strategy to a secure one. However, they do not decrease at the same rate. Rather, their ratio increases exponentially in $i$, such that at some point it is far more likely for the center to become inoculated than for a leaf to switch to insecure. The ratio is
    \begin{equation}
        \frac{r_{i}}{u_{(i,i+1)}}=\frac{e^{\frac{\beta i (-I)}{N}}}{N-1}
    \end{equation}
    It is 1 for the critical value of $i=i^{(1)}_{crit}$. 
    \item As $i$ increases, the probabilities $q_{(i,i-1)}$ and $s_{i}$ decrease at a constant rate, as it becomes overall less likely that an inoculated node will be chosen to pass on its strategy. This ratio is 
    \begin{equation}
        \frac{s_{i}}{q_{(i,i-1)}}=\frac{e^{\frac{\beta (I-N V)}{N}}}{N-1}
    \end{equation}
    We see that $s_i$ is always larger than $q_{(i,i-1)}$, as it is always more likely for one of the leaf nodes to get picked for reproduction if the center is inoculated.
    \item As $i$ increases, the difference $p_{(i,i-1)}-q_{(i,i+1)}$ decreases and eventually changes sign. We find the ratios 
    
    {\footnotesize
    \begin{equation}
        \frac{u_{(i,i+1)}}{q_{(i,i-1)}}=-\frac{\left(e^{\frac{\beta I}{n}-\beta V}+1\right) (N (I-V)-i I)}{\left(e^{-\frac{\beta i I}{N}}+1\right) (-i I-N V+I)}
    \end{equation}} 

    These ratios change from $u/q > 1$ to $u/q < 1$ at a particular value of $i=i^{(2)}_{crit}$, which is the break-even point where moving forward in the lower level of the Markov chain with the center node inoculated is equally probable to moving in the opposite direction in the upper level of the chain.  
\end{compactenum}
Putting together these three facts, we end up with the following explanation for the behavior of the process.

\begin{compactenum}
    \item Start in the Nash equilibrium $\mathcal{N}_1=(0,N\!-\!t)$. 
    \item Setting $i=0$ in Eq. (39), the random walk now has a $(N-1)$ times higher probability of leaving $\mathcal{N}_1$ with a transition to the state $(0,N-t-1)$ than it does of leaving towards $(1,N-t)$.
    \item The probability $u$ of decreasing the number of inoculated leaf nodes in the states $(0,N\!-\!t\!-\!i)$ then gradually decreases with larger $i$, and decreases faster than the probability $r$ of moving towards the state $(1,N\!-\!t\!-\!1)$. This means that after $i^{(1)}_{crit}$, where the two probabilities are equal, a move to a state with the center being inoculated becomes more likely. In our example, $i^{(1)}_{crit}=1$, such that $(0,7)$ is the breakeven point, and the first state where $u$ is significantly different from $r$ is $(0,6)$.
    \item However, the probability of a transition from $(1,N\!-\!t\!-\!1)$ to $(1,N\!-\!t\!-\!i\!+\!1)$ only slowly increases with increasing $i$. This leads to oscillations between opposite states $(1,N\!-\!t\!-\!i)$ and $(0,N\!-\!t\!-\!i)$ until the process escapes -- and the probability of this happening is larger for the transition on the upper level from $(0,N\!-\!t\!-\!i)$ to $(0,N\!-\!t\!-\!i\!-\!1)$, as long as the ratio in Eq. (43) $u/q >1$.  
    \item The further away the process moves from $\mathcal{N}_1$, the longer it takes to escape from the oscillations between the pairs of states. This is most pronounced in the pairs ($(0,N-t-i^{(2)}_{crit})$, $(1,N-t-i^{(2)}_{crit})$) corresponding to the threshold value of $i^{(2)}_{crit}$ where moving forward in $i$ on the upper level and moving backward in the lower level are equal, as well as ($(0,N-t-i^{(2)}_{crit}-1)$, $(1,N-t-i^{(2)}_{crit}-1)$). It is intuitive that the latter pair acts as the strongest sink, since the probabilities of leaving it in either way are smallest. Finding $i^{(2)}_{crit}=5$, the two pairs -- $\{(0,5),(1,5)\}$ and $\{(0,4),(1,4)\}$ -- are circled in Fig.~\ref{fig:starchain}, .  
    \item It is more likely that leaving the last sink happens via the transition $(1,N-t-i_{crit}-1)\rightarrow (1,N-t-i_{crit})$. Then, the cycles repeat, starting from the other side. This makes it intuitive why the process will not find its way back easily to~$\mathcal{N}_1$.
\end{compactenum}   
\end{proof}

The proof is additionally visualized by Fig.~\ref{fig:starchain}, where we show the case of the pairwise comparison process on a star graph with $N=12$ and $V/I=1/3$.

\begin{figure}[t]
  \centering
  \includegraphics[width=0.6\linewidth]{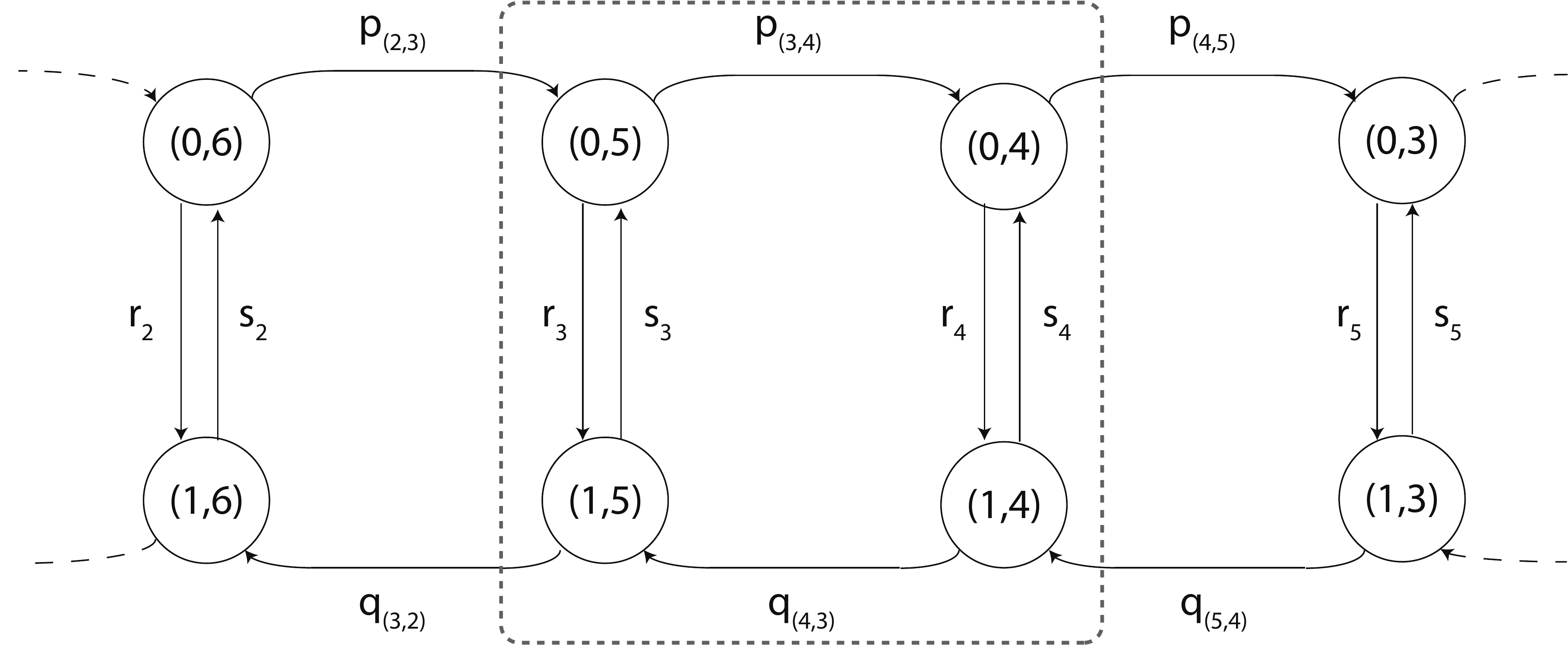}
  \caption{Crucial section of the Markov chain underlying evolutionary processes with local information on the star graph $K_{1,11}$. The ratios $r/u$, $s/q$, and $u/q$ of transition probabilities and their dependency on the distance $i$ from the Nash equilibrium $(0,8)$ has the two-dim. random walk do oscillations between pairs of states $\{(0,5),(1,5)\}$ and $\{(0,4),(1,4)\}$, and thus be trapped in increasingly constricted cycles. }  
  \label{fig:starchain}
\end{figure}

\sloppy


Consider the process as a two-dimensional random walk, defined by our transition probabilities $p^{k,m}_{n,o}$ (Eq.~\ref{eq:star1}-~\ref{eq:star2}) in the limit of the mutation rate $\mu \rightarrow 0$. Let $t=VN/I$; the Nash equilibrium $\mathcal{N}_1$ is then the state $(0,N-t)$, as discussed above. In the example, we have $t=4$, such that $\mathcal{N}_1=(0,8)$. 

\noindent For easier readability, we use the notation $\{u_{(i,i+1)}, q_{(i,i-1)}, r_i, s_i\}$ for \\ $\{p^{0,0}_{N\!-t\!-i\!,N\!-\!t\!-\!i\!-\!1},p^{1,1}_{N\!-\!t\!-\!i,N\!-\!t\!-\!i\!+\!1}, p^{0,1}_{N\!-\!t\!-\!i,N\!-\!t\!-\!i},p^{1,0}_{N\!-\!t\!-\!i,N\!-\!t\!-\!i}\}$  (cf. Eqns~\ref{eq:star1}-~\ref{eq:star2}). That is, $u$ gives the probability of moving one further step away from the Nash equilibrium $\mathcal{N}_1$ by one leaf node switching to insecure; $q$ the probability of moving one step closer to $\mathcal{N}_1$ by one leaf node switching to secure; $r$ the probability of the center switching to secure, and $s$ the probability of the center switching to insecure. The parameter $i$ can be thought to be the distance to $\mathcal{N}_1$. Note that this two-dimensional random walk has a defined direction; there is no possibility to increase $i$ in the lower level of the chain (where the center is inoculated), and no possibility to decrease $i$ in the upper level (where the center is insecure). 

\noindent We show in the Appendix that due to the setup of these transition probabilities, the random walk gets trapped in increasingly constricting cycles as it moves away from $\mathcal{N}_1$
In the example of Fig.~\ref{fig:starchain}, these states form the set $\{(0,5),(0,4),(1,5),(1,4)\}$, with the most weight on $(0,4)$ and $(0,5)$.

We thus have shown that using local information only, the system spends a high fraction of time in states that are not Nash equilibria, and will not reach the optimum. What does this mean for the evolutionary price of anarchy?

Seeing that the abundant states carry a high social cost compared to the optimum and also the worse equilibria in $\mathcal{N}_1$, it is already intuitive that $ePoA$ will be larger than $PoA$ in the star graph, as long as the mutation rate $\mu$ is sufficiently small ($\mu \lessapprox 0.005$). Indeed, using the stationary distribution $\mathbf{x}$ to calculate the social cost $\hat{S}$ as in Eq.~\ref{eq:socialaverage}, and then the evolutionary price $ePoA$ as in Eq.~\ref{eq:epoa}, we find that the evolutionary process has to settle for a relatively high $\hat{S}$, and with it, a high $ePoA$ (see Figs.~\ref{fig:robustmut}-~\ref{fig:robustn}). We summarize this in the following corollary:

\begin{corollary}[Evolutionary Price of Anarchy for Star Graphs.]
\label{cor:epoastar}
For small mutation rates $\mu \lessapprox 0.005$, arbitrary $N$ and arbitrary $|V| < |I|$, the evolutionary price of anarchy $ePoA_{\text{Star}}$ in a star graph $K_{1,N-1}$ with $N$ nodes is at least equal to or higher than the static price of anarchy $PoA_{\text{Star}}$.  That is, $ePoA_{\text{Star}}-PoA_{\text{Star}}\geq 0$. 
\end{corollary}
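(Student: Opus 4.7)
The plan is to reduce the claimed inequality $ePoA_{\text{Star}} \geq PoA_{\text{Star}}$ to showing $\hat{S} \geq R_{\mathcal{N}_1}$, where $\mathcal{N}_1 = (0, N-t)$ is the worse Nash equilibrium and $\mathcal{N}_2 = (1,0)$ is the optimum $\Omega$ among the two classes of pure equilibria identified via Corollary~1.1. Once this bound on $\hat{S}$ is established, dividing by $\Omega$ yields the claim directly, since $PoA_{\text{Star}} = R_{\mathcal{N}_1}/\Omega$.

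First, I would compute the social cost at every pure configuration $(t,l)$ in closed form: $R_{(0,l)} = lV + (N-l)^2 I/N$ and $R_{(1,l)} = (l+1)V + (N-1-l)I/N$. Using the identity $tI/N = V$ (from $t = VN/I$), a short calculation shows $R_{(0, N-t-i)} - R_{\mathcal{N}_1} = iV + i^2 I/N > 0$ for every $i > 0$, while $R_{(1, N-t-i)} - R_{\mathcal{N}_1}$ is strictly negative whenever $t > 1$. So the upper-level (center-insecure) sinks are strictly costlier than $\mathcal{N}_1$, while the lower-level (center-inoculated) sinks are cheaper, and the proof reduces to showing that inside every active sink pair $\{(0, N-t-i), (1, N-t-i)\}$ the stationary distribution tilts towards the upper state enough to push the weighted pair cost to at least $R_{\mathcal{N}_1}$.

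Second, I would invoke Lemma~\ref{offequilibrium} to conclude that for small enough $\mu$ the stationary mass is essentially supported on the sink pairs identified there, with only $O(\mu)$ weight on the two Nash-equilibrium classes and on transient states. Inside a pair, a local detailed-balance argument between the upper and lower state yields $\pi_0(i)/\pi_1(i) \approx s_i/r_i$, up to corrections due to in/out-flows from adjacent pairs. Step~2 of the proof of Lemma~\ref{offequilibrium} already supplies that $s_i$ dominates $r_i$ exponentially (via the factor $e^{\beta(I-NV)/N}$ together with the exponential decay of $r_i$ in $i$), so the upper state is exponentially favored inside deep sinks. Combining this tilt with the explicit cost gap from the first step establishes the per-pair inequality $\pi_0(i)(R_{(0,\cdot)} - R_{\mathcal{N}_1}) \geq \pi_1(i)(R_{\mathcal{N}_1} - R_{(1,\cdot)})$ at every active index $i \in \mathcal{I}$.

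Finally, summing the per-pair inequalities over the finitely many active sinks and absorbing the $O(\mu)$ mass on non-sink states gives $\hat{S} \geq R_{\mathcal{N}_1}$, from which $ePoA_{\text{Star}} - PoA_{\text{Star}} \geq 0$ follows. The main obstacle will be making the second step fully rigorous: local balance inside a pair is only approximate because the chain also enters and leaves each pair via adjacent pairs at the critical indices $i^{(1)}_{crit}$ and $i^{(2)}_{crit}$, and I need to show that these cross-pair flows do not eat into the ratio $\pi_0/\pi_1$ uniformly in $N$, in $V/I$, and across the three dynamics (Moran DB, Moran BD, pairwise comparison), for which the explicit expression of $s_i/r_i$ differs but remains exponential in $i$. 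A secondary subtlety is handling the small-$t$ regime $t \leq 1$ separately, where the sign comparison in step one degenerates.
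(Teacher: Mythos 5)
Your proposal follows essentially the same route as the paper, but you should know that the paper does not actually close the argument analytically: its ``proof'' of Corollary~\ref{cor:epoastar} consists of invoking Lemma~\ref{offequilibrium} (the sink-pair characterization), observing that the abundant states ``carry a high social cost,'' and then pointing to numerically computed stationary distributions and the figures. Your first step is a genuine sharpening of that sketch: the closed forms $R_{(0,l)} = lV + (N-l)^2 I/N$ and $R_{(1,l)} = (l+1)V + (N-1-l)I/N$ are correct, and using $tI/N=V$ one indeed gets $R_{\mathcal{N}_1}=NV$, $R_{(0,N-t-i)}-R_{\mathcal{N}_1}=iV+i^2I/N>0$, and $R_{(1,N-t-i)}-R_{\mathcal{N}_1}=(t+i-1)\frac{I}{N}(1-t)<0$ for $t>1$. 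So your reduction to a per-pair weight inequality is the right formalization of what the paper only asserts.

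The step you flag as the main obstacle is exactly the step the paper never proves, so you have correctly located the gap rather than introduced one. Two concrete cautions. First, Lemma~\ref{offequilibrium} as stated does not give you ``only $O(\mu)$ mass off the sink pairs'': in the small-$\mu$ limit the mutation-free chain has genuine absorbing states $(0,0)$ and $(1,N-1)$, and the paper's own appendix notes that the Moran death--birth process in fact parks a large fraction of time in these extremal states for small $\mu$. Your concentration assumption therefore needs either a restriction on the dynamics or a separate treatment of the extremal states. Fortunately the latter is harmless and worth adding explicitly: $R_{(0,0)}=NI>NV$ and $R_{(1,N-1)}=NV=R_{\mathcal{N}_1}$, so any mass escaping to the extremal states only helps (or at worst preserves) the inequality $\hat{S}\geq R_{\mathcal{N}_1}$. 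Second, your detailed-balance heuristic $\pi_0(i)/\pi_1(i)\approx s_i/r_i$ is the right intuition and matches the paper's observed weight ordering (most weight on the upper states of each pair), but the paper's Facts~1--3 only give the ratios $r_i/u_{(i,i+1)}$, $s_i/q_{(i,i-1)}$, and $u_{(i,i+1)}/q_{(i,i-1)}$, not $s_i/r_i$ directly, and the chain is not reversible within a pair because of the one-directional transitions along each level; making this uniform in $N$, $V/I$, and the three dynamics is open work, not something you can cite from Lemma~\ref{offequilibrium}. In short: correct skeleton, correct and useful explicit computations, and an honestly identified gap that coincides with the gap in the paper itself, which resolves it only by numerics.
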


\smallskip

\begin{figure}[t]
  \centering
  \includegraphics[width=0.4\linewidth]{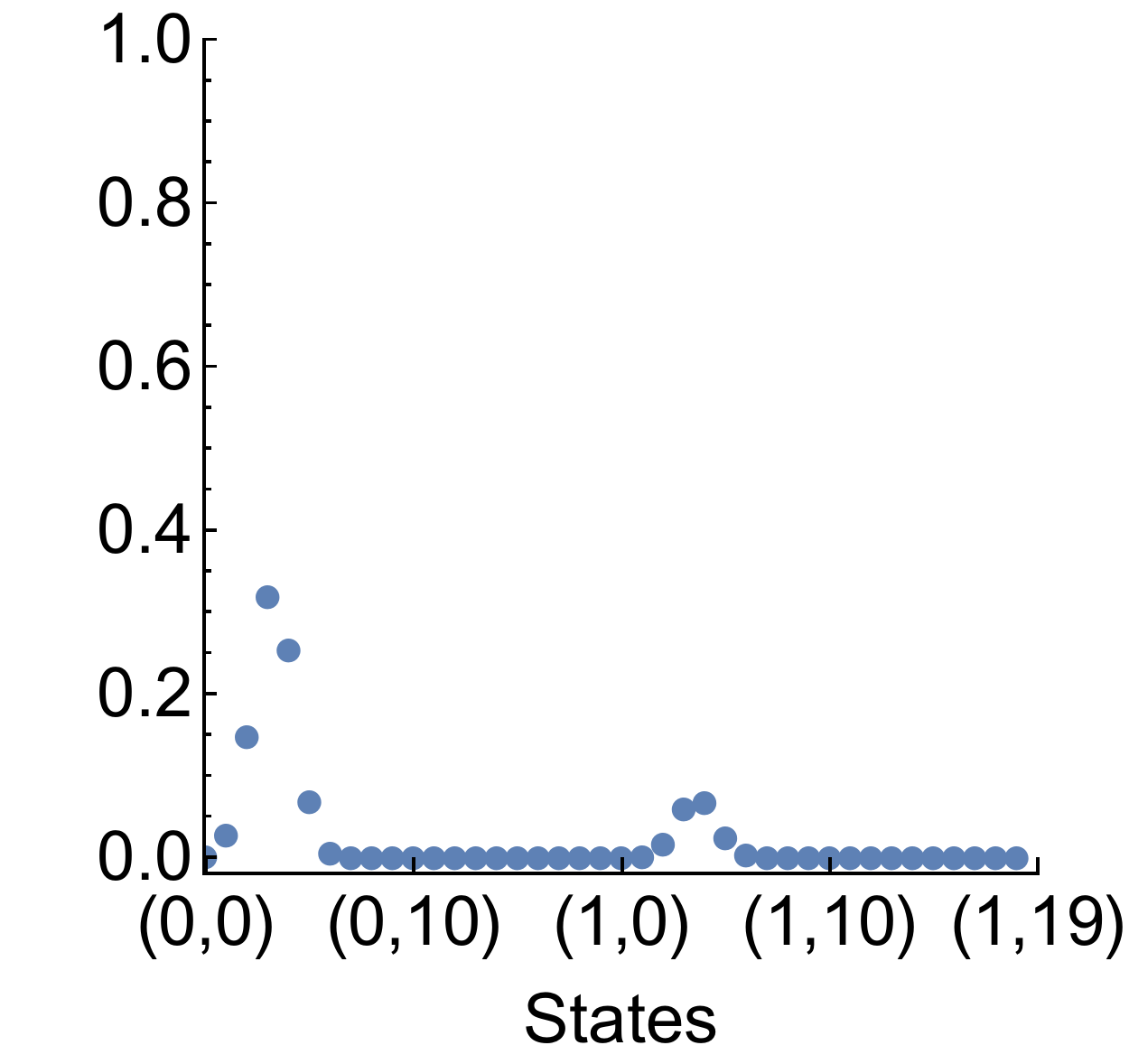}
  \caption{Stationary distribution for the pairwise comparison process on a star, with $N=20$, $V/I=1/2$ and $\mu=0.001$. We show the state $(i,j)$ (where $i \in \{0,1\}$ and $j\in[0,19]$) on the x-axis, and their weight in the invariant distribution on the y-axis. The two Nash equilibria, $(0,10)$ and $(1,0)$ are not abundant states; rather, we find most weight on states with less inoculated nodes, cf. Section~\ref{sec:star}. The pairwise comparison process is again the most efficient dynamics.}  
  \label{fig:stardist}
\end{figure}

\noindent We note that the exact evolutionary price in relation to the static price of anarchy is determined by parameter choices. This means that with a mutation rate $\mu \gtrapprox 0.005$, for some choices of cost vector and network size, it is possible to achieve a slightly lower \emph{average} cost than can be achieved by the worst possible solution to the one-shot game due to the higher mutation rate facilitating contributions from states that lead to a high total payoff (see Fig.~\ref{fig:robustmut}).

However, our results let us conjecture that for a local information model with reasonably small mutation rates, we cannot hope to do much better on average than the worst Nash equilibrium in highly structured networks (like a star), much less reach the optimum, such that paying (at least) the price of anarchy is not only a theoretical possibility, but also a realized fact. 

\section{Simulation of more complex topologies}\label{sec:appendsimresults}

With the algorithms introduced above, we are able to simulate the process also for more complicated graphs. While numerical analysis is usually impossible in these cases - it is intuitive that obtaining numerically precise results on the stochastic process will be harder with increasing graph size and less inherent symmetry in the graph, as the resulting Markov chain can have as many as $2^N$ states already for pure strategies - we can always compute the average social welfare by simulating the different dynamics long enough. We can even feasibly find the stationary distribution for graphs that have some inherent symmetry (which prevents the Markov chain from being of size $2^N$). We will exemplify this by briefly discussing the evolutionary price of anarchy for two such graphs, which are highly symmetric and built from cliques and stars as the two basic components previously analyzed. 

\begin{figure}[t]  
  \centering
  \includegraphics[width=0.5\linewidth]{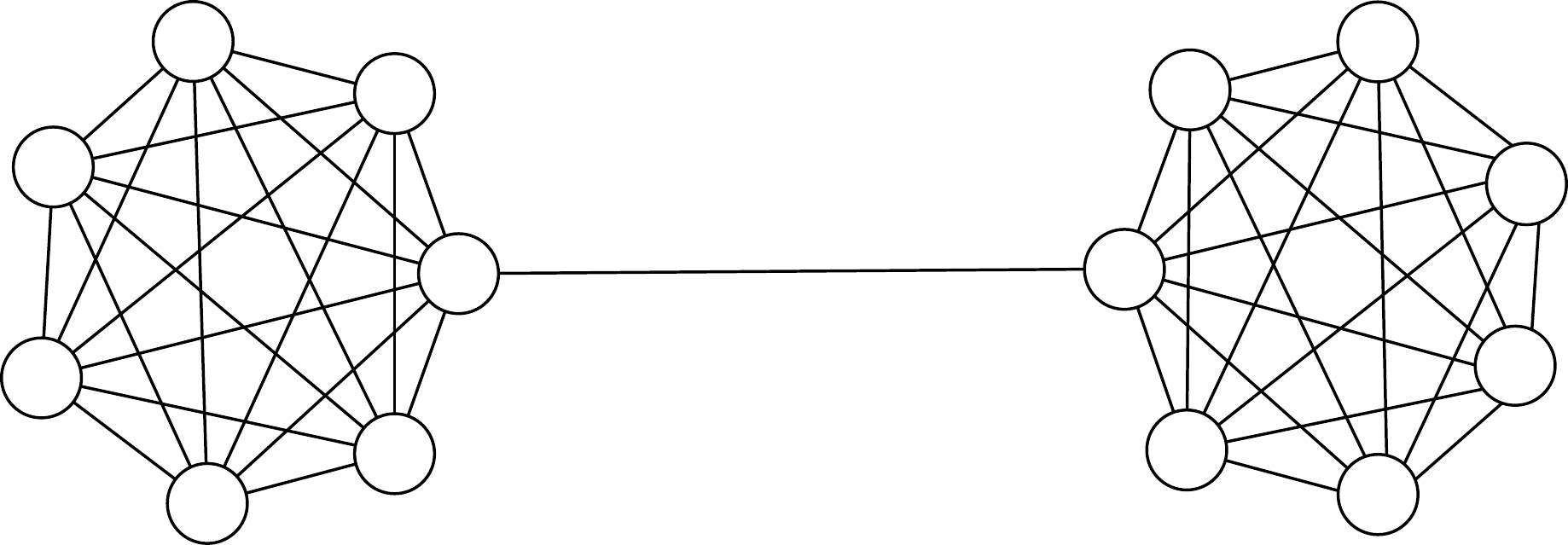}
  \caption{The 2CLIQUE graph, which consists of two complete graphs that are connected at one of their nodes (``hubs'').}
  \label{fig:2clique}
\end{figure}

\begin{figure}[t]
  \centering
  \includegraphics[width=0.5\linewidth]{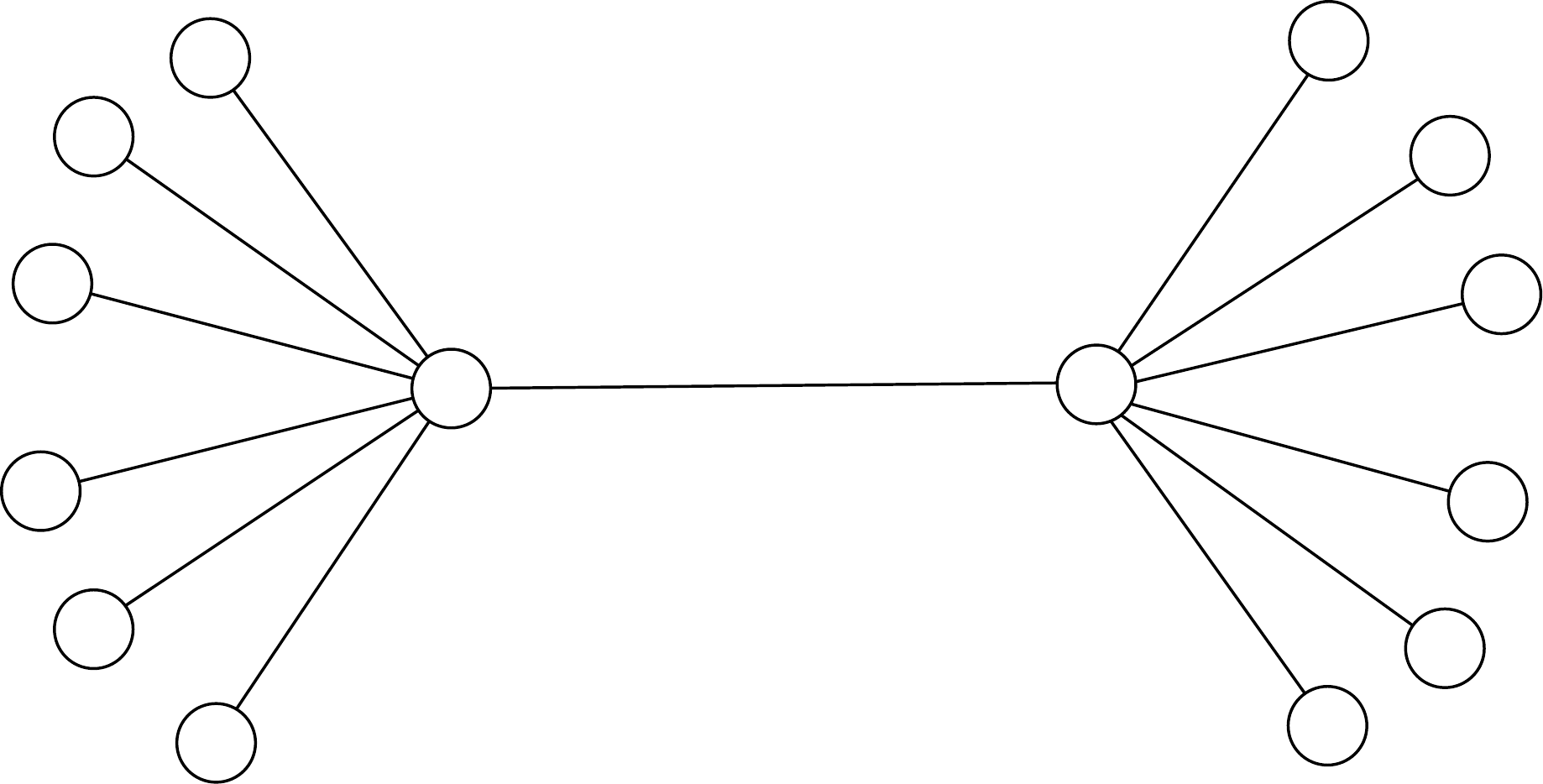}
  \caption{The 2STAR graph, which consists of two star graphs that are connected at their nodes.}
  \label{fig:2star}
\end{figure}

\subsection{2-clique network} 

An example of such a graph is depicted in Fig.\ref{fig:2clique}: two cliques, connected by a single path. The states of the Markov chain are now $(a,b,c,d)$, with $a,c \in \{0,1\}$ denoting the inoculation state of the two nodes where the cliques are joined (subsequently called the hubs), and $b,d \in \{0,...N-1\}$ denoting how many of the $N-1$ remaining nodes on each side are inoculated.  For this graph, seeing that the underlying Markov chain of the process only has $N^2$ states, it is indeed possible to also find both the selection-mutation equilibrium and the average social cost $\hat{S}$ by simulation. To be able to calculate $ePoA$, and compare it to the efficiencies of other equilibria as well as the static price of anarchy $PoA$, we also compute the social welfare in all possible configurations (vector $\vec{R}$), and find the optimum as the maximum over $\vec{R}$. This section describes our findings after running our simulations with $\mu=0.001$ for the network sizes $N=\{10,12,20,50\}$ and the realized cost vectors $ct=\{[-2,-1,0],[-3,-1,0]\}$. 

\begin{enumerate}
    \item First considering static Nash equilibria as the baseline, we again use Corollary 1.1 with $t=V/I$ and find two equivalence classes $\mathcal{N}_1, \mathcal{N}_2$ of Nash equilibria. Class $\mathcal{N}_1$ contains the states $(0,p,0,q)$, with $p+q=N-t$ , which have both hubs insecure, and $p$ respective $q$ inoculated nodes in the remainders of the two cliques. The other class of equilibria, $\mathcal{N}_2$, is composed of the states $(1,N/2-t-1,r,s)$ and $(r,s,1,N/2-t-1)$ with $r+s=N/2-t$, which are the states where at least one hub is always inoculated. In the case where $N$ is divisible by $I$, the two equilibria classes are equivalent, giving the same cost. However, when $N$ is not divisible by $I$, $\mathcal{N}_2$ is more efficient with respect to the overall cost, such that it is the cost of the equilibria in $\mathcal{N}_1$ which is used to calculate the $PoA$. 
    \item  In evolution, we find an $ePoA>1$ for all three network sizes and all three evolutionary processes -- the optimum is not an abundant state. It is also the case that $ePoA \geq PoA$, making the average cost slightly higher than the static price of anarchy, even though $ePoA$ becomes smaller with increasing network size.  In all the tested scenarios, the system does not spend a substantial fraction of time close to social optima and does not make up for partially costly off-equilibrium behavior it exhibits otherwise. We again find that the Moran processes show behavior that is slightly dependent on the network size $N$, whereas the pairwise comparison process gives more consistent results even for smaller $N$, and also exhibits the smallest value of $ePoA$. 
    
\end{enumerate}

We have seen in Corollary~\ref{cor:epoaclique} that for one clique on its own, we recover the Nash equilibria in our evolutionary process. However, the behavior of such a process on the 2CLIQUE graph leads to an outcome that differs both from the optimum and the static predictions for Nash equilibria. Intuitively, this is due to the link between the two hubs acting as a bottleneck for information flow.

\subsection{2-star network} 

Another example of a symmetric graph is depicted in Fig.~\ref{fig:2star}: two stars, joined at their hubs. Here, our findings are the following:

\begin{enumerate}
    \item We again begin by finding the static Nash equilibria and the static $PoA$. In this case, there are three equivalence classes of equilibria: $\mathcal{N}_1$ corresponds to its counterpart in the 2CLIQUE graph -- it contains states $(0,p,0,q)$ (defined analogously to above), with $p+q=N-t$, and is again the class with the most costly equilibria. It therefore is used for the static price of anarchy. Class $\mathcal{N}_2$ consists of the states $(1,0,s,0)$, with $s=0$ if $t \geq N/2$ and $s=1$ otherwise. Lastly, $\mathcal{N}_3$ exists if $t < N/2$, and features states of the type $(1,0,0,N/2-t+1)$. Here, we can also explicitly characterize the optimum: it is always the state $O=(1,0,1,0)$. 
    \item An evolutionary process on the 2STAR graph for the network sizes and $V/I$ ratios tested again leads to an $ePoA \geq PoA$. Seeing that the basic component of the graph -- the simple star -- already exhibits off-equilibrium behavior, this is not too surprising. However, as opposed to the single star, we now observe that the Moran Birth-Death scenario is advantageous for all network sizes, as it leads to the lowest overall $ePoA$. 
\end{enumerate}

\begin{figure}[h]
  \centering
  \hspace*{-1cm}
  \includegraphics[width=1.1\linewidth]{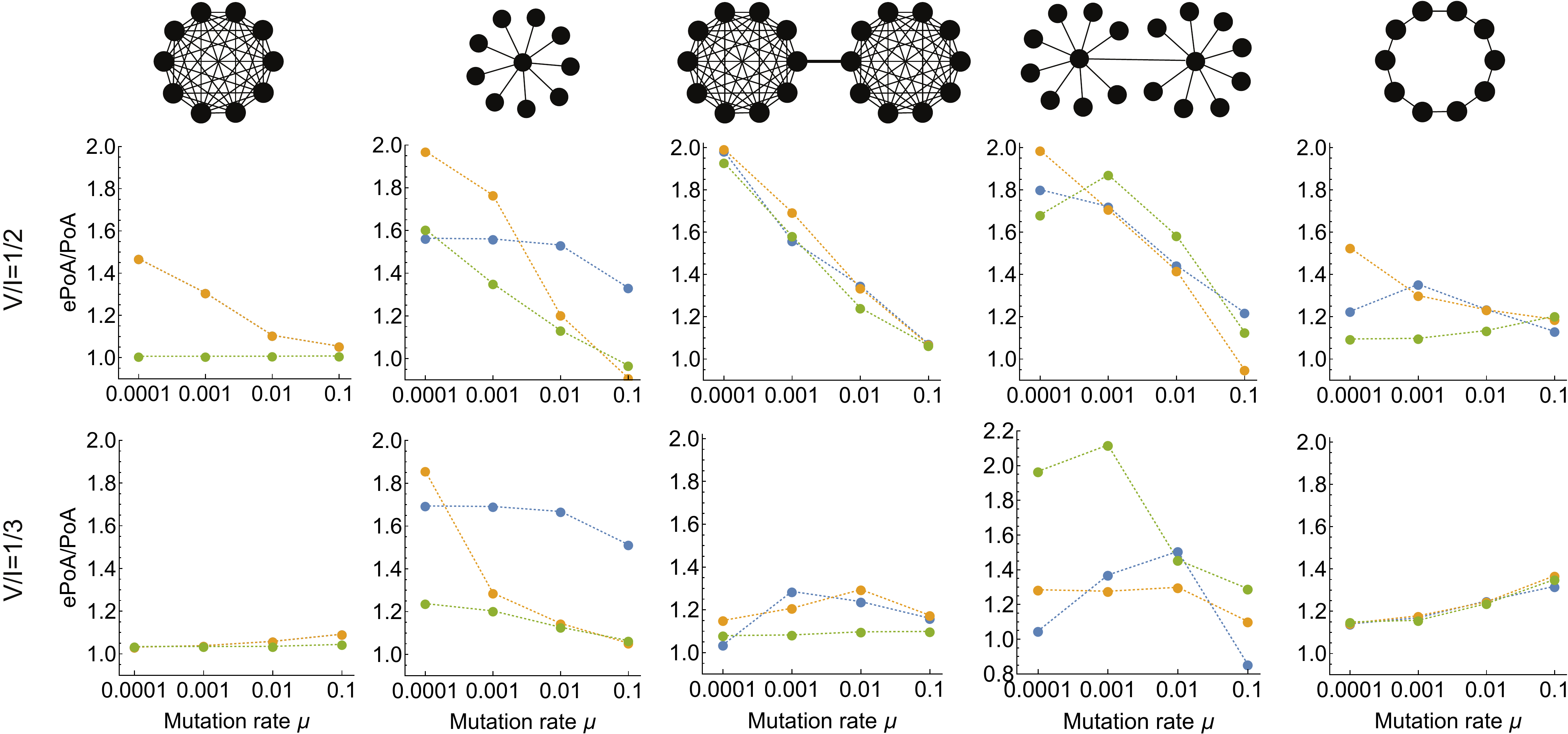}
  \caption{We visualize the ratio of the evolutionary price of anarchy to the price of anarchy, $ePoA/PoA$, for the four discussed topologies, two different values of $V/I$, and varying \emph{mutation rate}. The network size is kept constant at $N=20$. We plot the three different evolutionary dynamics together: Moran-DB (blue), Moran-BD (yellow), and pairwise comparison process (green). We see that different processes show different efficiency, depending on the network topology and the mutation rate, and the behavior of the $ePoA$ does not have to be strictly monotonic - there can be ``sweet spots'' for certain combinations of parameters. Simulations were run for $5*10^5$ iterations. }  
  \label{fig:robustmut}
\end{figure}

\begin{figure}[h]
  \centering
  \hspace*{-1cm}
  \includegraphics[width=1.1\linewidth]{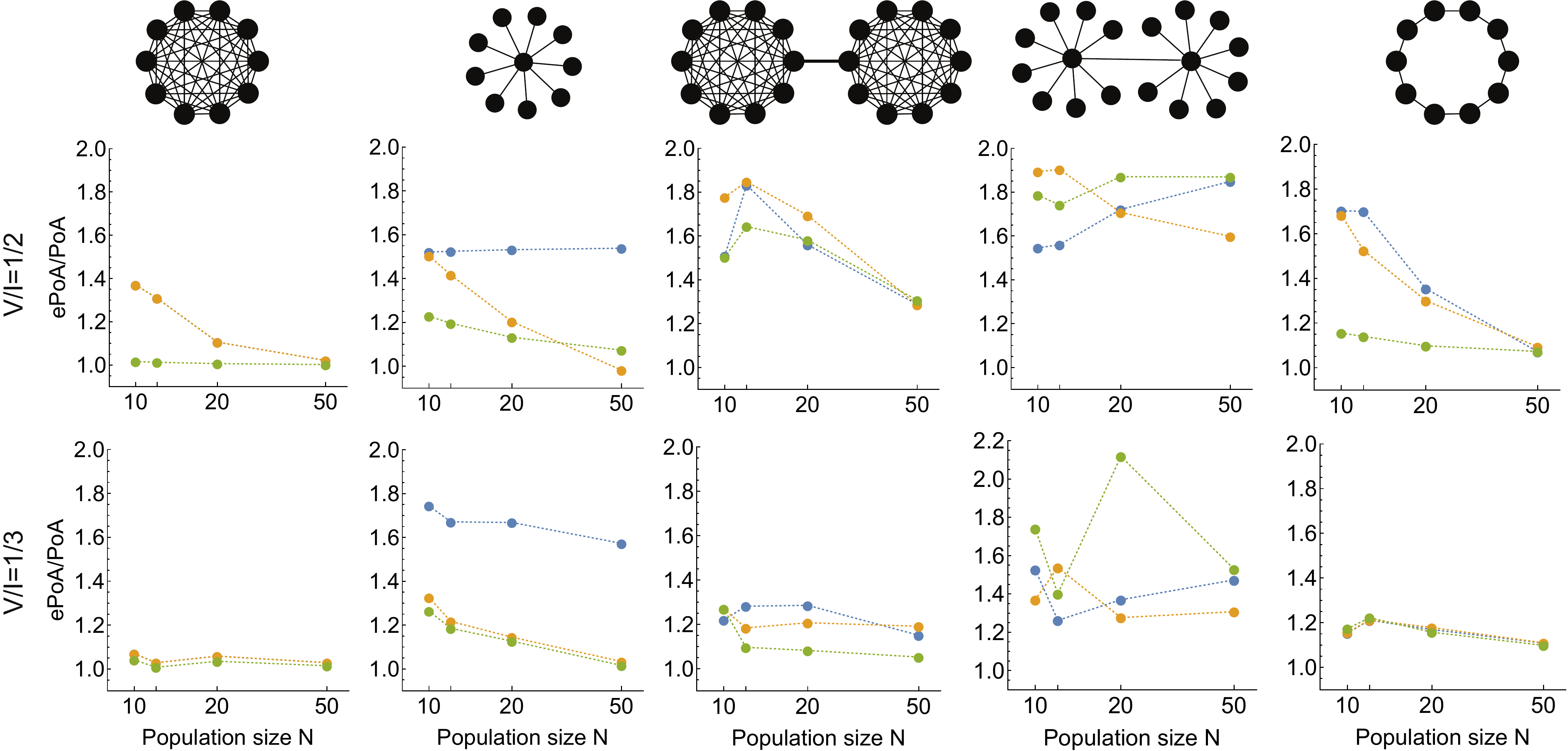}
  \caption{We visualize the ratio of the evolutionary price of anarchy to the price of anarchy, $ePoA/PoA$, for the four discussed topologies, two different values of $V/I$, and varying \emph{network size}. The mutation rate is kept constant at $\mu=0.001$. We again plot the three different evolutionary dynamics together: Moran-DB (blue), Moran-BD (yellow), and pairwise comparison process (green). We see that different processes show different efficiency, depending on the network topology and the network size. Again, the behavior of the $ePoA$ does not have to be strictly monotonic - there can be ``sweet spots'' for certain combinations of parameters. Simulations were run for $5*10^5$ iterations.  }  
  \label{fig:robustn}
\end{figure}

\subsection{Cycle graph}
Another structure of interest is the cycle graph. While this topology is simpler than the previously discussed 2-star and 2-clique networks, the process does not show straightforward behavior due to the large state space with few possible transitions. We start with a description of the states of the Markov chain and the Nash equilibria. 

\paragraph{Encoding of states and size of the state space for network size $N$.}
Due to symmetry considerations, we can encode states corresponding to configurations $\mathbf{a} \in {0,1}^N$ by counting the number of insecure nodes in between every two inoculated nodes, and listing the number of gaps ordered by their size, starting at 1. As an example, we can take a network of size $N=6$ with $i=4$ inoculated nodes in the configuration $010111$, which corresponds to the state $[4,2,0,0,0,0,0]$, where the first number in the list is simply $i=4$. This is thereby also equivalent to the configurations $101011$, $110101$ and $111010$. To enumerate and order these states, we can describe the sets of states with $i$ inoculated nodes more generally as $\{i\}$-integer partitions of $N-i$, where an $\{i\}$-partition of $x$ is the set of partitions of $x$ with length $\leq i$. This way, we get a natural ordering on states - from the smallest partition length to the largest-, and find a simple way to enumerate them. In the previous example, the corresponding partition of $N-i=2$ is $(1,1)$. The total number of states in a network with $N$ nodes is then
\begin{equation}
2+\sum_{i=1}^{N-1}\sum_{k=1}^i p_k(N-i),
\end{equation}  
where $p_k(x)$ is the partition function, giving the number of integer partitions of $x$ with length $k$.
\paragraph{Nash equilibria.} For a state to be a Nash equilibrium, the largest insecure component cannot be larger than $t=VN/I$. For this condition to hold, the number of inoculated nodes has to be $i^*=\left \lceil{\frac{N}{t+1}}\right \rceil $, and the Nash equilibrium states is then the $\{i\}$-partition of $N-i^*$, where the largest part is at most $t$. In our toy example above, using $V/I=1/2$, those are the states $[2,1,0,1,0,0,0]$ and $[2,0,2,0,0,0,0]$. 
\paragraph{Results.} We run simulations for $N=\{10,12,20,50\}$ and $\mu=\{0.0001,0.001,0.01,0.1\}$, and again find the $ePoA$ to be higher than the $PoA$, making the cycle a suboptimal topology for efficient inoculation due to the fact that clusters of inoculated/insecure nodes can not be broken if not for mutation. Intuitively speaking, a Nash equilibrium with e.g. 2 inoculated nodes that are not next to each other will not easily be reached, as moving an inoculated node from point A to B has to involve mutation: starting from a cluster of two inoculated nodes, two imitation events and two mutation events are necessary to move the second node two steps away from the first one. This becomes more unlikely as $\mu \rightarrow 0$. However, a surprising result can be seen in Fig.~\ref{fig:robustmut}, for $V/I=1/3$: the three evolutionary processes behave more or less the same in this case, and the efficiency is monotonic in the mutation rate, just as in the complete graph. An interesting direction for future work would thus be a more thorough analysis of the role of regular graphs in the behavior of stochastic evolutionary processes.

\section{Related work}\label{sec:relwork}

The term price of anarchy was introduced by Koutsoupias and Papadimitriou
in~\cite{koutsoupias1999worst}, however, the idea of measuring inefficiency of equilibrium is
older~\cite{dubey1986inefficiency}. 
The PoA was first mainly studied for network models (see~\cite{nisan2007algorithmic}
for an overview), but literature now covers a wide spectrum, 
from health care to basketball~\cite{roughgarden2015price}. 
The price of anarchy comes in different variations, also considering equilibria
beyond Nash (e.g., the Price of Sinking~\cite{goemans2005sink}).

While in many text book examples, Nash equilibria are typically ``bad'' and highlight
a failure of cooperation (e.g., Prisoner's Dilemma, tragedy of the commons, etc.),
research shows that in many application domains, 
even the worst game-theoretic equilibria are often 
fairly close to the optimal outcome~\cite{roughgarden2015price}. 
However, these examples also have in common that users have 
\emph{full} information about the game. 
While the study of \emph{incomplete} information games also has a long
tradition~\cite{harsanyi1967games}, much less is known today 
about the price of anarchy in games with partial information 
based on local interactions.
In general, it is believed that similar extensions are challenging~\cite{roughgarden2015price}. 
An interesting line of recent work initiated the study of Bayes-Nash equilibria~\cite{singh2004computing}
in games of incomplete information (e.g.,~\cite{singh2004computing}), 
and in particular the 
Bayes-Nash Price of Anarchy~\cite{roughgarden2015price}.

Another interesting concept is 
the stochastic price of anarchy~\cite{chung2008price},
addressing the issue that some PoAs are sensitive to
small perturbations, motivating the study of more stable
notions. However, while the stochastic price of anarchy is also a dynamic concept, it requires
even stronger assumptions on players: beyond current information
(which is derived implicitly) the players also need knowledge of
the execution \emph{history}, to decide on their best action.
Also the dynamic games based on learning dynamics~\cite{cesa2006prediction},
such as regret minimization~\cite{hart2000simple} or
fictitious play~\cite{hofbauer2002global}, 
require players to keep track of historical information. 
All these results hence do not apply to the evolutionary games we study here, as our strategies do not require the nodes to have memory, only local information about their neighbors' payoffs.
To the best of our knowledge, this paper is the first to
consider the application of the price of anarchy to evolutionary
games with this kind of limited information and memoryless players, also providing a novel perspective on the discussion
of the impact of limited information in games. 

In particular, we apply our framework
in a case study, a specific \emph{network game}
where players interact topologically.
Such games have already received much attention in the
past: the price of anarchy was originally introduced with such
games in mind~\cite{roughgarden2015price}.
However, such games are still not well-understood
from an evolutionary perspective. In general, evolutionary games on graphs
are considered challenging to analyze,
even if the underlying topology is simple. See~\cite{allen2014games} for a review on this area.

More specifically, we revisited
Aspnes et al.'s virus inoculation game~\cite{Aspnes:2005:ISV:1070432.1070440}
in this paper.
Traditional 
virus propagation models studied
virus infection 
in terms of birth and death rates of viruses~\cite{epi-model-1,epi-model-2}
as well as through local interactions on networks~\cite{computer-epi-1,computer-epi-2,computer-epi-3} (e.g., 
the Internet~\cite{internet-epi-1,internet-epi-2}).
An interesting study is due to Montanari and
Saberi~\cite{montanari2009convergence} who compare game-theoretic
and epidemic models, investigating differences 
between the viral behavior of the spread of viruses, new technologies,
and new political or social beliefs. 

Aspnes et al.'s model is motivated by percolation theory~\cite{kesten1982percolation}
as infected node infect all unprotected neighbors,
and already led to much follow-up work, e.g., also
considering the impact of a small number of malicious players~\cite{podc-malice},
or considering social network contexts where users account
at least for their friends' costs~\cite{friendship-1,friendship-2}. There also exists literature
on how to encourage highly connected nodes to inoculate~\cite{aspnes2011towards}.
However, we are not aware
In this paper, we extended Aspnes et al.'s model
accounting for local information and 
evolving players (the two main open questions stated in~\cite{Aspnes:2005:ISV:1070432.1070440}),
revisiting and refining their conclusions.

\section{Conclusion}\label{sec:conclusion}

This paper introduced the evolutionary price of anarchy 
to study equilibrium behavior of simple agents interacting
in a distributed system, based on local information. 
We showed that for memoryless agents, 
the resulting equlibria can be significantly different from 
their static equivalent,
and that Nash equilibria are sometimes assumed only very infrequently. 

We believe that our work can provide a novel perspective on the discussion
of the impact of limited information in games. In particular, it opens several interesting avenues for future
research. Our model still comes with several limitations 
due to the well-known notorious difficulty of analyzing evolutionary multi-player games on graphs. 
In particular, it will be interesting to analyze the ePoA
for additional topologies and more general interaction models (beyond memoryless),
as well as to explore randomized (mixed) strategies.
It would also be interesting to prove or disprove our conjecture that
processes based on imitation dynamics always result in a
ePoA which is higher than the PoA in the virus inoculation game.

{\balance

 }
 
\appendix

\section{Clique: Markov chain transition probabilities}
\label{sec:appendcliquemarkov}

The standard Moran probabilities for the complete graph (clique) look the following:
\begin{equation}
\label{eq:completemoran1}
    p_{i,i+1}=\frac{N-i}{N}\,\frac{i\, f^i_C}{i\, f^i_C+(N-i)\, f^i_D},
\end{equation}
and 
\begin{equation}
\label{eq:completemoran2}
    p_{i,i-1}=\frac{i}{N}\,\frac{(N-i)\, f^i_D}{i\, f^i_C+(N-i)\, f^i_D}.
\end{equation}

The pairwise comparison process does not use exponential fitness, but the original payoffs:
\begin{equation}
\label{eq:pcclique1}
    p_{i,i+1}=\frac{N-i}{N}\,\frac{i}{N}\,\frac{1}{1+e^{-\beta( \hat{\pi}^i_C- \hat{\pi}^i_D)}},
\end{equation}
and 
\begin{equation}
\label{eq:pcclique2}
    p_{i,i-1}=\frac{N-i}{N}\,\frac{i}{N}\,\frac{1}{1+e^{-\beta( \hat{\pi}^i_D- \hat{\pi}^i_C)}}.
\end{equation}

The transition matrix $\mathbf{P}$ for the process with mutation looks the following:
\[
\setlength{\arraycolsep}{2pt}
\renewcommand{\arraystretch}{0.8}
P=\begin{pmatrix}
    1-P_{0,1}       & P_{0,1} & 0 & \dots & \dots & 0 \\
    P_{1,0}       & 1-P_{1,0}-P_{1,2} & P_{1,2} & \dots &\dots & 0\\ 
    \vdots & \vdots & \vdots &\ddots &\vdots&\vdots \\
    0       & 0 & 0 & \dots & \dots &P_{N-1,N} \\ 
    0       & 0 & 0 & \dots & \dots & 1-P_{N,N-1}
\end{pmatrix}
\]

The terms $P_{i,j}$ correspond to the terms in Eqns.~\ref{eq:cliquemuteq}f.

In the version of the complete graph that corresponds to the standard case of a well-mixed population (where we have a positive probability of the same node being chosen for reproduction and death), the two Moran processes are equivalent. We note that while Eqns.~\ref{eq:completemoran1}-~\ref{eq:completemoran2} correspond to the standard Moran process on complete graphs as described in the literature~\cite{nowak06}, it is however simple to change the expressions such that there is no possibility of self-replacement for the nodes. This does not change our qualitative results, but does introduce a difference between birth-death and death-birth processes. 
We remark that the difference between $ePoA$ and $PoA$ is always slightly smaller for the pairwise comparison process as compared to the Moran processes. For large values of $N$, both kinds of stochastic processes lead to an outcome with the same average efficiency as the Nash equilibria. Meanwhile, for smaller $N$, the outcomes are somewhat different on first glance: the Moran process has the system drift towards the two extremal configurations, where either all nodes are inoculated or none are. This comes with an overall increased social cost, and is due to the inherent randomness of the stochastic process. This effect can then be mitigated by using the pairwise comparison process with intermediate to strong selection strength $\beta$ instead, or increasing the parameter $s$ in the transformation of payoffs into fitness. This lets us again recover the Nash equilibria as the most abundant states, and with them the corresponding efficiency. 

\section{Star graph transition probabilities}
\label{sec:appendstartrans}
It is intuitive that the two Moran processes on the star graph will not lead to the same dynamics, unlike in the case of the complete graph: for the clique, the global minimal cost and the minimal cost of a neighbor coincide, giving equal probabilities for the best node overall to reproduce and for the best neighboring node to reproduce. This does not hold for the star graph, such that the two Moran processes are no longer equivalent (cf.\cite{hadjichrysanthou2011evolutionary}. The probabilities are the following:

\paragraph{Death-birth process}
\begin{equation}
\label{eq:star1}
    p^{0,1}_{l,l}=\frac{1}{N}\,\frac{l\,f^{0,l}_C}{l\,f^{0,l}_C+(N-l-1)\,f^{0,l}_D}
\end{equation}
\begin{equation}
    p^{1,1}_{l,l+1}=\frac{N-l-1}{N}
\end{equation}
\begin{equation}
    p^{1,0}_{l,l}=\frac{1}{N}\,\frac{(N-l-1)\,f^{1,l}_D}{l\,f^{0,l}_C+(N-l-1)\,f^{0,l}_D}
\end{equation}
\begin{equation}
    p^{0,0}_{l,l-1}=\frac{l}{N}
\end{equation}
\begin{equation}
    p^{0,0}_{l,l+1}=p^{1,1}_{l,l-1}=0
\end{equation}
\begin{equation}
    p^{k,m}_{n,o}= 0 \,\,\, \forall (n,o): |n-o| > 1
\end{equation}

\paragraph{ Birth-death process}
\begin{equation}
    p^{0,1}_{l,l}=\frac{l\,f^{0,l}_C}{f^{0,l}_{Center}+l\,f^{0,l}_C+(N-l-1)\,f^{0,l}_D}
\end{equation}
\begin{equation}
    p^{1,1}_{l,l+1}=\frac{f^{1,l}_{Center}}{f^{1,l}_{Center}+l\,f^{1,l}_C+(N-l-1)\,f^{1,l}_D}\,\frac{N-l-1}{N-1}
\end{equation}
\begin{equation}
    p^{1,0}_{l,l}=\frac{f^{0,l}_{Center}}{f^{0,l}_{Center}+l\,f^{0,l}_C+(N-l-1)\,f^{0,l}_D}\,\frac{l}{N-1}
\end{equation}
\begin{equation}
    p^{0,0}_{l,l-1}=\frac{(N-l-1)\,f^{1,l}_D}{f^{1,l}_{Center}+l\,f^{1,l}_C+(N-l-1)\,f^{1,l}_D}
\end{equation}
\begin{equation}
    p^{0,0}_{l,l+1}=p^{1,1}_{l,l-1}=0
\end{equation}
\begin{equation}
    p^{k,m}_{n,o}= 0 \,\,\, \forall (n,o): |n-o| > 1
\end{equation}

\paragraph{Pairwise comparison -- imitation process}
\begin{equation}
    p^{0,1}_{l,l}=\frac{1}{N}\frac{l}{N-1}\frac{1}{1+e^{-\beta(\hat{\pi}^{0,l}_C-\hat{\pi}^{0,l}_{Center})}}
\end{equation}
\begin{equation}
    p^{1,1}_{l,l+1}=\frac{N-l-1}{N}\frac{1}{1+e^{-\beta(\hat{\pi}^{1,l}_{Center}-\hat{\pi}^{1,l}_D)}}
\end{equation}
\begin{equation}
    p^{1,0}_{l,l}=\frac{N-l-1}{N-1}\frac{1}{N}\frac{1}{1+e^{-\beta(\hat{\pi}^{1,l}_D-\hat{\pi}^{1,l}_{Center})}}
\end{equation}
\begin{equation}
    p^{0,0}_{l,l-1}=\frac{l}{N}\frac{1}{1+e^{-\beta(\hat{\pi}^{0,l}_{Center}-\hat{\pi}^{0,l}_C)}}
\end{equation}
\begin{equation}
    p^{0,0}_{l,l+1}=p^{1,1}_{l,l-1}=0
\end{equation}
\begin{equation}
\label{eq:star2}
    p^{k,m}_{n,o}= 0 \,\,\, \forall (n,o): |n-o| > 1
\end{equation}

The transition matrix has the entries
\begin{equation}
\label{eq:trmatstar1}
    P^{0,1}_{l,l}=\frac{\mu}{2}\,\frac{1}{N} + (1-\mu) p^{0,1}_{l,l}
\end{equation}
\begin{equation}
    P^{1,1}_{l,l+1}=\frac{\mu}{2}\,\frac{N-l-1}{N} + (1-\mu) p^{1,1}_{l,l+1}
\end{equation}
\begin{equation}
    P^{1,0}_{l,l}=\frac{\mu}{2}\,\frac{1}{N} + (1-\mu) p^{1,0}_{l,l}
\end{equation}
\begin{equation}
\label{eq:trmatstar2}
    P^{0,0}_{l,l-1}=\frac{\mu}{2}\,\frac{l}{N} + (1-\mu) p^{0,0}_{l,l-1},
\end{equation}

Comparing the probabilities for the two Moran processes now, we can indeed see that they will not lead to the same dynamics (e.g. by comparing the two expressions for $p^{0,0}_{l,l-1}$). In fact, we make the following observations:
\begin{itemize}
	\item The different evolutionary dynamics indeed show different behavior, which is reflected in the weight distribution on the states as given by their invariant 		distribution, as well as their evolutionary price of anarchy. We find that for any choice of parameters $N$, $\mu$ and $V/I$, the Moran Death-Birth process leads to the 		worst $ePoA$ of the three described processes. It spends a large fraction of time in the two extremal states (for small mutation rates $\mu<0.05$), or has an almost uniform 		invariant distribution over the states.  
    \item For the other two processes, Moran Birth-Death and pairwise comparison, we also observe off-equilibrium behavior. The outcome of the Birth-Death scenario again depends on 		the network size $N$: for small $N<12$, there is again drift towards the two extremal states $(0,0)$ and $(1,N-1)$ as a result of random noise. The smaller the mutation 		rate, the more pronounced this drift is, and the larger the network size needed to balance it out. The pairwise comparison process however does not show the same strong 	dependence on the network size in its behavior.
\end{itemize}

\end{document}